
\newif\ifreport\reporttrue
\documentclass[12pt,draftclsnofoot,onecolumn]{IEEEtran}
\usepackage{amsmath,amsfonts,amssymb,bm}
\usepackage[tight,footnotesize]{subfigure}
\usepackage{graphicx,stfloats}
\usepackage{algorithm,algpseudocode}
\usepackage{cite,url}
\usepackage{color}
\usepackage{theorem} 

\allowdisplaybreaks[4]  

\newtheorem{Proposition}{Proposition}
\newtheorem{Lemma}{Lemma}

\theorembodyfont{\rmfamily}

\newcommand{\In}{\,{\in}\,}
\newcommand{\st}{\ensuremath{\text{s.t.}}}
\newcommand{\Iset}{\ensuremath{\mathcal{I}}}
\newcommand{\Nset}{\ensuremath{\mathcal{N}}}

\newcommand{\Rset}{\ensuremath{\mathbb{R}}}
\IEEEoverridecommandlockouts    
\hyphenation{allo-cation}     

\begin{document}

\title{Capacity Region Bounds and Resource Allocation for Two-Way OFDM Relay Channels}

\linespread{1.2}\selectfont
\author{Fei~He$^\star$, Yin~Sun$^\dag$, Limin~Xiao$^\star$, Xiang~Chen$^\star$, Chong-Yung~Chi$^\ddagger$, and Shidong~Zhou$^\star$
\thanks{This work is supported by National Basic Research Program of China (2012CB316002), National S\&T Major Project (2010ZX03005-003), National NSF of China (60832008), China's 863 Project (2009AA011501), Tsinghua Research Funding (2010THZ02-3), International Science Technology Cooperation Program (2010DFB10410), National Science Council, Taiwan, under Grant NSC-99-2221-E-007-052-MY3, Ericsson Company and the MediaTek Fellowship. The material in this paper was presented in part in IEEE ICC 2012, Ottawa, Canada \cite{He_ICC12}.}
\thanks{$^\star$Fei He, Limin Xiao, Xiang Chen, and Shidong Zhou are with Department of Electronic Engineering, Research Institute of Information Technology, Tsinghua National Laboratory for Information Science and Technology (TNList), Tsinghua University, Beijing 100084, China. Xiang Chen, the corresponding author, who is also with Aerospace Center, Tsinghua University. E-mail: hef08@mails.tsinghua.edu.cn, \{xiaolm,chenxiang,zhousd\}@tsinghua.edu.cn.}
\thanks{$^\dag$Yin Sun is with the Department of Electrical and Computer Engineering, the Ohio State University, Columbus, Ohio 43210, USA. E-mail: sunyin02@gmail.com.}
\thanks{$^\ddagger$Chong-Yung Chi is with the Institute of Communications
Engineering and the Department of Electrical Engineering, National Tsinghua University, Hsinchu, Taiwan 30013. E-mail: cychi@ee.nthu.edu.tw.}}


\maketitle

\vspace{-1.5cm}
\begin{abstract}
In this paper, we consider two-way orthogonal frequency division multiplexing (OFDM) relay channels, where the direct link between the two terminal nodes is too weak to be used for data transmission. The widely known per-subcarrier decode-and-forward (DF) relay strategy, treats each subcarrier as a separate channel, and performs independent channel coding over
each subcarrier.
We show that this per-subcarrier DF relay strategy is only a suboptimal DF relay strategy, and present a multi-subcarrier DF relay strategy which utilizes cross-subcarrier channel coding to achieve a larger rate region.
We then propose an optimal resource allocation algorithm to characterize the achievable rate region of the multi-subcarrier DF relay strategy. The computational complexity of this algorithm is much smaller than that of standard Lagrangian duality optimization algorithms.
We further analyze the asymptotic performance of two-way relay strategies including the above two DF relay strategies and an amplify-and-forward (AF) relay strategy. The analysis shows that the multi-subcarrier DF relay strategy tends to achieve the capacity region of the two-way OFDM relay channels in the low signal-to-noise ratio (SNR) regime, while the AF relay strategy tends to achieve the multiplexing gain region of the two-way OFDM relay channels in the high SNR regime. Numerical results are provided to justify all the analytical results and the efficacy of the proposed optimal resource allocation algorithm.
\end{abstract}

\begin{IEEEkeywords}
Two-way relay, orthogonal frequency division multiplexing, capacity region, decode-and-forward, amplify-and-forward, resource allocation.
\end{IEEEkeywords}

\linespread{1.5}\selectfont 
\section{Introduction}\label{sec:intro}
Orthogonal frequency division multiplexing (OFDM) relaying is a cost-efficient technique to enhance the coverage and throughput of future wireless networks, and it has been widely advocated in many 4G standards, such as IEEE 802.16m and 3GPP advanced long term evolution (LTE-Advanced) \cite{Salem_SURV10,Peters_TCOMMAG09}. In practice, a relay node operates in a half-duplex mode to avoid strong self-interference. However, since the half-duplex relay node can not transmit all the time (or over the entire frequency band), the benefits provided by the relay node are not fully exploited \cite{Rankov_JSAC07}.

Recently, two-way relay technique has drawn extensive attention, because of its potential to improve the spectrum efficiency of one-way relay strategies \cite{Rankov_JSAC07,Larsson_VTC06S,Xie07,Kim_TIT08,Kim_TIT11,Tian_TWCOM12,
AsharK_Allerton12,Gunduz_Asilomar08,Vaze_TIT11,Sanguinetti_JSAC12}. If one utilizes traditional one-way relay strategies to realize two-way communications, four phases are needed. To improve the four-phase strategy, the two relay-to-destination phases can be combined into one broadcast phase \cite{Larsson_VTC06S,Xie07}, and the yielded three-phase strategy can support the same data rates with less channel resource by exploiting the side information at the terminal nodes. One can further combine the two source-to-relay phases into one multiple-access phase to yield a two-phase strategy (see Fig.~\ref{fig:two-way-OFDM}) \cite{Kim_TIT08}. Hybrid strategies with more phases have been considered in \cite{Kim_TIT11,Tian_TWCOM12,AsharK_Allerton12} to further enlarge the achievable rate region. The diversity-multiplexing tradeoff for two-way relay channels was studied in \cite{Gunduz_Asilomar08,Vaze_TIT11,Sanguinetti_JSAC12}.

Two-way relay strategies also have been in conjunction with OFDM techniques \cite{Ho_ICC08,Jang_TSP10,Gao_TSP09,Jitvan_TVT09,Liu_TWCOM10,Shin_VTC11F,Xiong_ICC12}. With amplify-and-forward (AF) relay strategy, power allocation and subcarrier permutation have been studied in \cite{Ho_ICC08,Jang_TSP10}, and its corresponding channel estimation problem has been thoroughly discussed in \cite{Gao_TSP09}. Resource allocation for two-way communications in an OFDM cellular network with both AF and decode-and-forward (DF) relay strategies was studies in \cite{Jitvan_TVT09}. A graph-based approach was proposed to solve the combinatorial resource allocation problem in \cite{Liu_TWCOM10}. For practical quality of service (QoS) requirements, the proportional fairness and transmission delay have been considered for two-way DF OFDM relay networks in \cite{Shin_VTC11F} and \cite{Xiong_ICC12}, respectively. All these studies of two-way OFDM relay channels with a DF strategy were almost centered on a per-subcarrier DF relay strategy, which treats each subcarrier as a separate two-way relay channel, and performs independent channel coding over each subcarrier. Such a per-subcarrier DF relay strategy is probably motivated by the fact that per-subcarrier channel coding can achieve the capacity of point-to-point OFDM channels. However, the story is different in OFDM relay channels: per-subcarrier channel coding can no longer attain the optimal achievable rate region of DF relaying for two-way OFDM relay channels. In other words, per-subcarrier DF relaying is merely a suboptimal DF relay strategy. More details are provided in Section~III, where an example is provided to show that a novel DF relay strategy achieves a larger rate region.
\begin{figure*}[t]
    \centering
    \subfigure[The multiple-access phase.]{
    \scalebox{0.85}{\includegraphics*[64,676][318,787]{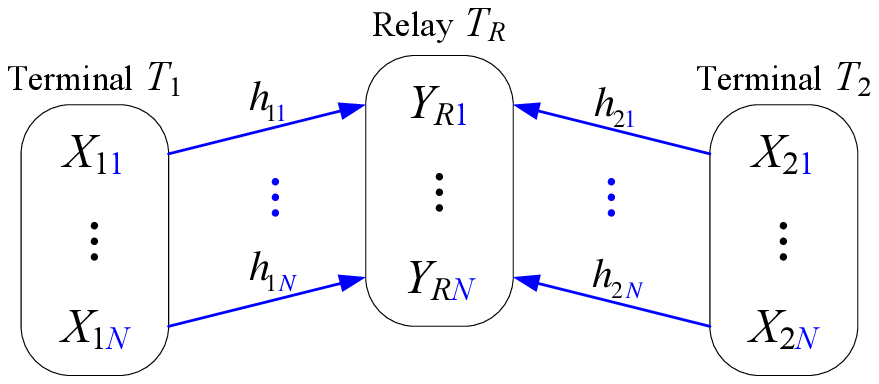}} \label{fig:MA-phase}}
    \hspace{0.4cm}
    \subfigure[The broadcast phase.]{
    \scalebox{0.85}{\includegraphics*[64,556][318,667]{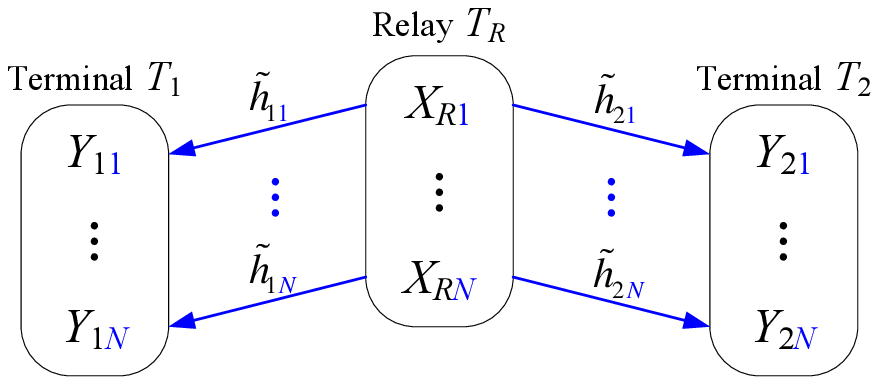}} \label{fig:BC-phase}}
    \caption{System model of a two-way OFDM relay channel, consisting of (a) a multiple-access phase and (b) a broadcast phase. } \label{fig:two-way-OFDM}
\end{figure*}

This paper focuses on the two-way OFDM relay channel with a negligible direct link due to large path attenuation or heavy blockage. This is motivated by the fact that the relay node plays a more important role when the direct link is weak than when it is strong \cite{Sanguinetti_JSAC12}. The optimal two-way relay strategy in this case consists of two phases, which are illustrated in Fig.~\ref{fig:two-way-OFDM}. We intend to answer the following questions in this paper: What is the optimal DF relay strategy when the direct link is negligible? Under what conditions is the optimal DF relay strategy better (or worse) than the AF relay strategy, and vice versa? Is the optimal DF relay strategy able to achieve the capacity region of two-way OFDM relay channels in some scenarios? To address these questions, we also investigate the capacity region bounds of the two-way OFDM relay channels and compare different relay strategies under the optimal resource allocation. The main contributions of this paper are summarized as follows:
\begin{itemize}
\item We present a multi-subcarrier DF relay strategy, which has a larger achievable rate region than the widely studied per-subcarrier DF relay strategy. Though this multi-subcarrier DF relay strategy is merely a simple extension of the existing result \cite{Kim_TIT08}, it is the optimal DF relay strategy for two-way OFDM relay channels.\footnote{A strategy is the optimal DF relay strategy, meaning that its achievable rate region contains the rate region of any other DF relay strategy. It is worth mentioning that relay strategies other than DF relay strategies may have a larger or smaller achievable rate region compared to this multi-subcarrier DF relay strategy in certain scenarios.} To the best of our knowledge, this multi-subcarrier DF relay strategy has not been reported in the open literature. We develop an optimal resource allocation algorithm to characterize the achievable rate region of the multi-subcarrier DF relay strategy. We show that the optimal resource allocation solution has a low-dimension structure. By exploiting this structure, the complexity of both primal and dual optimizations can be significantly reduced. The relative benefits of our multi-subcarrier DF relay strategy and its resource allocation algorithm are summarized in Table~\ref{tab:psc-msc}.
\item We analyze the asymptotic performance of different relay strategies in the low and high signal-to-noise ratio (SNR) regimes under optimal resource allocation. First, we show that the multi-subcarrier DF relay strategy tends to achieve the capacity region of two-way OFDM relay channels in the low SNR regime. Then, we characterize the multiplexing gain regions of the two DF relay strategies, the AF relay strategy, and the cut-set outer bound under optimal resource allocation. We show that the AF relay strategy can achieve the multiplexing gain region of two-way OFDM relay channels in the high SNR regime. Numerical results are provided to justify our analytical results. The asymptotic performance comparison of AF and DF strategies is summarized in Table~\ref{tab:AF-DF}.

\end{itemize}
\begin{table}[t]
\tabcolsep 0pt
\caption{Comparison of per-subcarrier and multi-subcarrier DF two-way relay strategies.}\label{tab:psc-msc}
\vspace*{-35pt}
\begin{center}
\def\temptablewidth{0.63\textwidth}
{\rule{\temptablewidth}{1pt}}
\begin{tabular*}{\temptablewidth}{@{\extracolsep{\fill}}c||c|c}
Strategy & Achievable rate region~~~ & Resource allocation complexity \\ \hline
per-subcarrier DF  & small & low \cite{Jitvan_TVT09}  \\
multi-subcarrier DF~ & large & very low
\end{tabular*}
{\rule{\temptablewidth}{1pt}}
\vspace*{-25pt}
\end{center}
\end{table}

The rest of this paper is organized as follows. Section~\ref{sec:system} presents the system model. Section~\ref{sec:region-msc} presents the multi-subcarrier two-way DF relay strategy and its achievable rate region. The resource allocation algorithm of the multi-subcarrier DF relay strategy is developed in Section~\ref{sec:RA-sol}. The asymptotic performance analysis of different relay strategies is provided in Section~\ref{sec:bound-analysis}. Some numerical results are presented in Section~\ref{sec:simulation}. Finally, Section~\ref{sec:conclusion} draws some conclusions.

{\bf Notation:} Throughout this paper, we use bold lowercase letters to denote column vectors, and also denote an $n\times1$ column vector by $(x_1,\ldots,x_n)$. $\Rset_+$ and $\Rset_+^n$ denote the set of nonnegative real numbers and the set of $n\times1$ column vectors with nonnegative real components, respectively. $\bm{p}\succeq\bm{0}$ means that each component of column vector $\bm{p}$ is nonnegative. $I(X;Y)$ denotes the mutual information between random variables $X$ and $Y$, and $I(X;Y|Z)$ denotes the conditional mutual information of $X$ and $Y$ given $Z$. $\mathbb{E}[\cdot]$ denotes the statistical expectation of the argument.

\begin{table}[t]
\tabcolsep 0pt
\caption{Asymptotic performance comparison of AF and DF two-way relay strategies.}\label{tab:AF-DF}
\vspace*{-35pt}
\begin{center}
\def\temptablewidth{0.68\textwidth}
{\rule{\temptablewidth}{1pt}}
\begin{tabular*}{\temptablewidth}{@{\extracolsep{\fill}}c||c|c}
Strategy & Low SNR & High SNR \\ \hline
multi-subcarrier DF~  & achieving capacity region~ & smaller multiplexing gain  \\
AF & lower rate &~ achieving multiplexing gain region
\end{tabular*}
{\rule{\temptablewidth}{1pt}}
\vspace*{-25pt}
\end{center}
\end{table}

\section{System Model} \label{sec:system}
We consider a two-way OFDM relay channel with $N$ subcarriers, where two terminal nodes $T_1$ and $T_2$ exchange messages by virtue of an intermediate relay node $T_R$.  The wireless transmissions in the two-way DF relay channel is composed of two phases: a multiple-access phase and a broadcast phase, as illustrated in Fig.~\ref{fig:two-way-OFDM}. In the multiple-access phase, the terminal nodes $T_1$ and $T_2$ simultaneously transmit their messages to the relay node $T_R$. In the broadcast phase, the relay node $T_R$ decodes its received messages, re-encodes them into a new codeword, and broadcasts it to the terminal nodes $T_1$ and $T_2$. The time proportion of the multiple-access phase is denoted as $t$ for $0<t<1$, and thereby the time proportion of the broadcast phase is $1-t$.

In the multiple-access phase, the received signal $Y_{Rn}$ of the relay node $T_R$ over subcarrier $n$ can be expressed as
\vspace{-4pt}
\begin{equation}
Y_{Rn}=h_{1n}\sqrt{\frac{p_{1n}}{t}}\,X_{1n}+h_{2n}\sqrt{\frac{p_{2n}}{t}}\,X_{2n}+Z_{Rn}, \label{eq:MA-channel}
\end{equation}
where $X_{in}$ ($i\In\{1,2\}$) is the unit-power transmitted symbol of the terminal nodes $T_i$ over subcarrier $n$, $h_{in}$ is the channel coefficient from $T_i$ to $T_R$ over subcarrier $n$, $p_{in}$ is the average transmission power, and $Z_{Rn}$ is the independent complex Gaussian noise with zero mean and variance $\sigma^2_{Rn}$.

In the broadcast phase, the received signals of the terminal nodes $T_1$ and $T_2$ over subcarrier $n$ are given by
\vspace{-6pt}
\begin{eqnarray}
&&Y_{1n}=\tilde{h}_{1n}\sqrt{\frac{p_{Rn}}{1-t}}\,X_{Rn}+Z_{1n}, \label{eq:BC-channel1}\\[-4pt]
&&Y_{2n}=\tilde{h}_{2n}\sqrt{\frac{p_{Rn}}{1-t}}\,X_{Rn}+Z_{2n}, \label{eq:BC-channel}
\end{eqnarray}
where $X_{Rn}$ and $p_{Rn}$ denote the unit-power transmitted symbol and the average transmission power of the relay node $T_R$ over subcarrier $n$, respectively, $\tilde{h}_{in}$ denotes the associated channel coefficient from $T_R$ to $T_i$ over subcarrier $n$, and $Z_{in}$ is the independent complex Gaussian noise with zero mean and variance $\sigma^2_{in}$ ($i\In\{1,2\}$).

Each node is subject to an individual average power constraint, which is given by
\vspace{-2pt}
\begin{equation} \label{eq:power-constr}
\sum_{n=1}^N{p_{in}}\leq P_i,~i=1,2,R,
\end{equation}
where $P_i$ denotes the maximum average transmission power of node $T_i$.
Let us use $\bm P{\triangleq}(P_1,P_2,P_R)$ to represent the maximum average powers of the three nodes, and use $\mathcal{G}\triangleq\{g_{1n},g_{2n},\tilde{g}_{1n},$ $\tilde{g}_{2n}\}_{n=1}^{N}$ to represent the channel state information (CSI), where $g_{in}{\triangleq}\,|h_{in}|^2/\sigma^2_{Rn}$ and $\tilde{g}_{in}{\triangleq}\,|\tilde{h}_{in}|^2/\sigma^2_{in}$ ($i\In\{1,2\}$) represent the normalized channel power gains. We assume that the perfect CSI $\mathcal{G}$ is available at the network controller to perform resource allocation throughout the paper.

\section{Optimal Two-Way OFDM DF Relay Strategy} \label{sec:region-msc}
\label{sec:rate-region}
This section presents a multi-subcarrier DF relay strategy, which can realize the optimal achievable rate region of the DF relay strategy for two-way OFDM relay channels. We also show that the per-subcarrier DF relay strategy considered in \cite{Jitvan_TVT09,Liu_TWCOM10,Shin_VTC11F,Xiong_ICC12} can only achieve a suboptimal rate region.

Let $R_{12}$ and $R_{21}$ denote the end-to-end data rates from $T_1$ to $T_2$ and from $T_2$ to $T_1$, respectively. When the direct link between $T_1$ and $T_2$ is negligible, the \emph{optimal} DF relay strategy of discrete memoryless two-way relay channels was given by Theorem~2 in \cite{Kim_TIT08}.
By applying this theorem to two-way parallel Gaussian relay channel and considering the optimal channel input distribution, we can obtain the optimal achievable rate region as stated in the following lemma:
\begin{Lemma} \label{prop1}
Given the maximum transmission powers $\bm P$ of the three nodes and the CSI $\mathcal{G}$, the optimal achievable rate region of the two-way parallel Gaussian relay channel \eqref{eq:MA-channel}-\eqref{eq:BC-channel} with a DF strategy is given by:
\vspace{-4pt}
{\setlength\arraycolsep{0pt} \begin{eqnarray}\label{eq:prop1}
\mathcal{R}_\text{DF}(\bm P,\mathcal{G})=\Bigg\{
&&(R_{12},R_{21})\In\Rset_+^2~\bigg|\nonumber\\[-4pt]
&&R_{12}\leq\min\!\bigg\{\sum_{n=1}^N{t\log_2\!\Big(1\!+\!\frac{g_{1n}p_{1n}}{t}\Big)},
\sum_{n=1}^N\!{(1-t)\log_2\!\Big(1\!+\!\frac{\tilde{g}_{2n}p_{Rn}}{1-t}\Big)}\bigg\},\nonumber\\[-4pt]
&&R_{21}\leq\min\!\bigg\{\sum_{n=1}^N{t\log_2\!\Big(1\!+\!\frac{g_{2n}p_{2n}}{t}\Big)},
\sum_{n=1}^N\!{(1-t)\log_2\!\Big(1\!+\!\frac{\tilde{g}_{1n}p_{Rn}}{1-t}\Big)}\bigg\},\nonumber\\[-4pt]
&&R_{12}+R_{21}\leq\sum_{n=1}^N{t\log_2\!\Big(1\!+\!\frac{g_{1n}p_{1n}+g_{2n}p_{2n}}{t}\Big)},\nonumber\\[-4pt]
&&0<t<1,\,\sum_{n=1}^N{p_{in}}\leq P_i,\,p_{in}\geq0,\,i=1,2,R,\,n=1,\ldots,N\Bigg\}.
\end{eqnarray}}
\end{Lemma}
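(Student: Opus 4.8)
The plan is to obtain \eqref{eq:prop1} by specializing the discrete-memoryless result of Theorem~2 in \cite{Kim_TIT08} to the two-phase parallel Gaussian relay channel \eqref{eq:MA-channel}--\eqref{eq:BC-channel} and then optimizing over the channel input distribution. Recall that, for a two-way relay channel with a negligible direct link and a fixed product input distribution $p(x_1)p(x_2)p(x_R)$, the DF strategy of \cite{Kim_TIT08} achieves every rate pair satisfying $R_{12}\le\min\{I(X_1;Y_R|X_2),\,I(X_R;Y_2)\}$, $R_{21}\le\min\{I(X_2;Y_R|X_1),\,I(X_R;Y_1)\}$, and $R_{12}+R_{21}\le I(X_1,X_2;Y_R)$, and the optimal DF region is the union of these polytopes over all admissible input distributions. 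First I would cast the OFDM channel at hand as such a channel: a "use'' of the overall channel specifies the time-division parameter $t$, the per-node per-subcarrier powers used in each phase, and the per-subcarrier input distributions; note also that the terminal inputs $X_{1n},X_{2n}$ and the relay input $X_{Rn}$ live in disjoint time slots, so a product input distribution is without loss. Identifying $p_{in}$ with the frame-averaged power $t\,P_{in}^{\mathrm{inst}}$ in the multiple-access phase and $p_{Rn}$ with $(1-t)\,P_{Rn}^{\mathrm{inst}}$ in the broadcast phase turns the per-frame power budgets $\sum_n P_{in}^{\mathrm{inst}}\le P_i$ into exactly \eqref{eq:power-constr}; this is the normalization already built into \eqref{eq:MA-channel}--\eqref{eq:BC-channel}.

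Next I would argue that the optimal input distribution is a product of independent, zero-mean circularly-symmetric complex Gaussians, one per subcarrier in each phase. In the multiple-access phase, fixing per-subcarrier powers $p_{1n}/t$ and $p_{2n}/t$ and using that $Z_{Rn}$ is independent across $n$, the chain $I(X_1^N;Y_R^N|X_2^N)=h(Y_R^N|X_2^N)-\sum_n h(Z_{Rn})\le\sum_n\big(h(Y_{Rn}|X_{2n})-h(Z_{Rn})\big)$ shows that independence across subcarriers is optimal, while for each subcarrier the Gaussian maximizes the differential entropy under the power constraint; the identical computation applies to $I(X_2^N;Y_R^N|X_1^N)$ and $I(X_1^N,X_2^N;Y_R^N)$. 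The crucial point is that a single choice---independent Gaussians on every subcarrier with the prescribed powers---\emph{simultaneously} maximizes all three of these mutual informations, which is the standard fact that the Gaussian input simultaneously attains every corner of the Gaussian multiple-access capacity region. Likewise, in the broadcast phase the per-subcarrier point-to-point links \eqref{eq:BC-channel1}--\eqref{eq:BC-channel} give $I(X_R^N;Y_i^N)\le\sum_n I(X_{Rn};Y_{in})$ with equality for independent Gaussian $X_{Rn}$, and each $I(X_{Rn};Y_{in})$ is increasing in $p_{Rn}$.

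It then remains to evaluate the mutual informations: with the above inputs, $I(X_{1n};Y_{Rn}|X_{2n})=\log_2(1+g_{1n}p_{1n}/t)$ per channel use, and since the multiple-access phase occupies a time fraction $t$ its contribution to $R_{12}$ is $t\log_2(1+g_{1n}p_{1n}/t)$; summing over $n$ and repeating for the remaining terms (using $g_{in}=|h_{in}|^2/\sigma^2_{Rn}$ and $\tilde g_{in}=|\tilde h_{in}|^2/\sigma^2_{in}$) reproduces exactly the right-hand sides in \eqref{eq:prop1}, and taking the union over all $t\in(0,1)$ and all power allocations obeying \eqref{eq:power-constr} gives $\mathcal{R}_\text{DF}(\bm P,\mathcal{G})$. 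I expect the only delicate points to be the simultaneity claim in the second step---checking that maximizing the sum-rate constraint $I(X_1,X_2;Y_R)$ does not demand a different (correlated or non-Gaussian) input from the one maximizing the individual constraints---and the bookkeeping that converts the half-duplex time-sharing together with the per-node average-power constraint into the $t$- and $(1-t)$-scaled logarithms; both are routine but must be stated with care. One may additionally remark that, since $t\log_2(1+gp/t)$ is jointly concave in $(t,p)$ as the perspective of a concave function, the union in \eqref{eq:prop1} is already convex, so no further time-sharing between distinct $(t,\bm p)$ configurations enlarges the region.
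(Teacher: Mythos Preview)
Your proposal is correct and follows essentially the same route as the paper's proof: invoke Theorem~2 of \cite{Kim_TIT08}, specialize to the parallel Gaussian channel by taking vector inputs $X_i=(X_{i1},\ldots,X_{iN})$, argue that independent per-subcarrier Gaussian inputs simultaneously maximize all the mutual-information terms, and then read off the expressions in \eqref{eq:prop1} under the power and time-sharing constraints. You give more detail than the paper (the chain-rule argument for independence across subcarriers, the explicit simultaneity check, and the convexity remark), but the structure and key ideas are the same.
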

\begin{proof}
See Appendix~\ref{sec:proof-prop1}.
\end{proof}

In fact, the optimal rate region of \eqref{eq:prop1} is the intersection of the capacity regions of a parallel multi-access channel and a parallel broadcast channel with receiver side information\footnote{Here, the receiver side information means
each user's own transmitted message.} \cite{Kramer_ITW07}. This rate region can be achieved by the following multi-subcarrier DF relay strategy: In the multiple-access phase, the relay node decodes the messages from the two terminal nodes by either successive cancellation decoding with time sharing/rate-splitting, or joint decoding \cite{Cover06,Gamal11,Tse_TIT98}. In the broadcast phase, the relay node can utilize nested lattice codes, nested and algebraic superposition codes to transmit the messages to the intended destinations \cite{Kramer_ITW07,Tian_TWCOM12}. Some related information theoretical random coding techniques were discussed in \cite{Xie07,Kramer_ITW07,Oechtering_TIT08}. In either of the phases, channel encoding/decoding is performed jointly across all the subcarriers.

On the other hand, the per-subcarrier DF relay strategy independently implements the DF relay scheme of \cite{Kim_TIT08} over each subcarrier \cite{Jitvan_TVT09,Liu_TWCOM10,Shin_VTC11F,Xiong_ICC12}. The achievable rate region of the per-subcarrier two-way DF relay strategy is given by
\vspace{-4pt}
{\setlength\arraycolsep{0pt} \begin{eqnarray}\label{eq:region-psc}
\mathcal{R}_\text{p,DF}(\bm P,\mathcal{G})=\Bigg\{
&&(R_{12},R_{21})\In\Rset_+^2~\bigg|\nonumber\\[-4pt]
&&R_{12}\leq\sum_{n=1}^N{\min\!\bigg\{t\log_2\!\Big(1\!+\!\frac{g_{1n}p_{1n}}{t}\Big),
(1-{t})\log_2\!\Big(1\!+\!\frac{\tilde{g}_{2n}p_{Rn}}{1-t}\Big)\bigg\}},\nonumber\\[-4pt]
&&R_{21}\leq\sum_{n=1}^N{\min\!\bigg\{t\log_2\!\Big(1\!+\!\frac{g_{2n}p_{2n}}{t}\Big),
(1-{t})\log_2\!\Big(1\!+\!\frac{\tilde{g}_{1n}p_{Rn}}{1-t}\Big)\bigg\}},\nonumber\\[-4pt]
&&R_{12}+R_{21}\leq\sum_{n=1}^N{t\log_2\!\Big(1\!+\!\frac{g_{1n}p_{1n}+g_{2n}p_{2n}}{t}\Big)},\nonumber\\[-4pt]
&&0<t<1,\,\sum_{n=1}^N{p_{in}}\leq P_i,\,p_{in}\geq0,\,i=1,2,R,\,n=1,\ldots,N\Bigg\}.
\end{eqnarray}}
$\!\!$The only difference between ${\mathcal{R}}_\text{DF}(\bm P,\mathcal{G})$ and ${\mathcal{R}}_\text{p,DF}(\bm P,\mathcal{G})$ lies in the order of the function $\min\{\cdot\}$ and the summation in \eqref{eq:prop1} and \eqref{eq:region-psc}, implying ${\mathcal{R}}_\text{p,DF}(\bm P,\mathcal{G})\subseteq{\mathcal{R}}_\text{DF}(\bm P,\mathcal{G})$.
Therefore, the per-subcarrier DF relay strategy is only a suboptimal DF relay strategy. Similar results have been reported in \cite{Liang_TIT07,Sun_TSP12} for one-way parallel relay channels.
\ifreport
\begin{figure}[t]
    \centering
    \scalebox{0.6}{\includegraphics*[83,477][405,795]{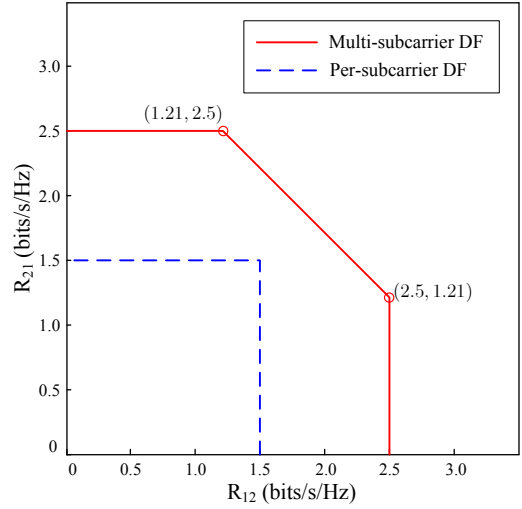}}
    \caption{Comparison of the achievable rate regions of different two-way OFDM relay strategies for a toy example with $N=2,\,(g_{11},g_{12},g_{21},g_{22})=(1,15,7,3),\,\tilde{g}_{in}=g_{in},\,p_{in}=0.5,\,t=0.5$.} \label{fig:region-compare-example}
\end{figure}
\else
\fi

We now provide a toy example to compare these two DF relay strategies. Consider a two-way OFDM relay channel with $N=2$ subcarriers. The wireless channel power gains are given by $(g_{11},g_{12},g_{21},g_{22})$ $=(1,15,7,3)$ and $\tilde{g}_{in}=g_{in}$ for $n,i\in\{1,2\}$. The power and channel resources are fixed to be $p_{in}=0.5$ and $t=0.5$. According to Lemma~\ref{prop1}, the achievable rate region of the multi-subcarrier DF relay strategy with fixed resource allocation is given by the set of rate pairs $(R_{12},R_{21})$ satisfying
\vspace{-6pt}
\begin{subequations}\label{eq:region1}
\begin{align}
&R_{12}\leq\min\{0.5+2,1.5+1\}=2.5~\text{bits/s/Hz},\\[-6pt]
&R_{21}\leq\min\{1.5+1,0.5+2\}=2.5~\text{bits/s/Hz},\\[-6pt]
&R_{12}+R_{21}\leq0.5\left[\log_2(9)+\log_2(19)\right]=3.71~\text{bits/s/Hz}.\\[-30pt]\nonumber
\end{align}
\end{subequations}
Similarly, by \eqref{eq:region-psc}, the achievable rate region of the per-subcarrier DF relay strategy with fixed resource allocation is given by the set of rate pairs $(R_{12},R_{21})$ satisfying
\vspace{-6pt}
\begin{subequations}\label{eq:region2}
\begin{align}
&R_{12}\leq\min\{0.5,1.5\}+\min\{2,1\}=1.5~\text{bits/s/Hz},\\[-6pt]
&R_{21}\leq\min\{1.5,0.5\}+\min\{1,2\}=1.5~\text{bits/s/Hz},\\[-6pt]
&R_{12}+R_{21}\leq0.5\left[\log_2(9)+\log_2(19)\right]=3.71~\text{bits/s/Hz},\\[-30pt]\nonumber
\end{align}
\end{subequations}
where the sum-rate constraint is actually inactive.
\ifreport
The achievable rate regions in \eqref{eq:region1} and \eqref{eq:region2} are shown in Fig.~\ref{fig:region-compare-example}, from which one can easily observe that the considered multi-subcarrier DF relay strategy has a larger achievable rate region.
\else
By comparing \eqref{eq:region1} and \eqref{eq:region2}, one can easily observe that the considered multi-subcarrier DF relay strategy has a larger achievable rate region.
\fi
$\!\!$An effective and computationally efficient approach for the optimal resource allocation of the proposed two-way DF strategy will be presented in the next section.

\section{Resource Allocation Algorithm} \label{sec:RA-sol}
We now develop a resource allocation algorithm to characterize the boundary of the achievable rate region ${\mathcal{R}}_\text{DF}(\bm P,\mathcal{G})$ in \eqref{eq:prop1}. We will show that the optimal resource allocation solution has a low-dimension structure, and thereby the number of dual variables to be optimized is reduced; see Propositions~\ref{prop2} and \ref{prop3} below for more details. The complexity of our resource allocation algorithm turns out to be much lower than that of the standard Lagrangian dual optimization algorithm and the existing resource allocation algorithm reported in \cite{Jitvan_TVT09}.

\subsection{Resource Allocation Problem Formulation}
Let $\rho\in(0,\infty)$ denote the rate ratio of the two terminal nodes, i.e.,
\vspace{-6pt}
\begin{equation} \label{eq:rate-ratio}
\rho\triangleq R_{21}/R_{12}.
\end{equation}
Then, a boundary point $(R_{12},R_{21})=(R_{12},{\rho}R_{12})$ of the achievable rate region ${\mathcal{R}}_\text{DF}(\bm P,\mathcal{G})$ is attained by maximizing $R_{12}$ within ${\mathcal{R}}_\text{DF}(\bm P,\mathcal{G})$ for a given rate ratio $\rho$. Therefore, the boundary point of ${\mathcal{R}}_\text{DF}(\bm P,\mathcal{G})$ is characterized by the following resource allocation problem:
\vspace{-7pt}
\begin{subequations} \label{eq:original-problem}
\begin{align}
\hspace{-0.4cm}\max_{\bm{p}_1,\bm{p}_2,\bm{p}_R\succeq\bm{0},\,R_{12},\,t}\quad&R_{12}\\[-6pt]
\text{s.t.}~~~~~~\quad&R_{12}\leq t\sum_{n=1}^N{\log_2\!\Big(1\!+\!\frac{g_{1n}p_{1n}}{t}\Big)},\label{eq:mac-ieq1}\\[-4pt]
&R_{12}\leq\frac{t}{\rho}\sum_{n=1}^N{\log_2\!\Big(1\!+\!\frac{g_{2n}p_{2n}}{t}\Big)},\label{eq:mac-ieq2}\\[-4pt]
&R_{12}\leq\frac{t}{\rho+1}\sum_{n=1}^N{\log_2\!\Big(1\!+\!\frac{g_{1n}p_{1n}+g_{2n}p_{2n}}{t}\Big)},\label{eq:mac-ieq3}\\[-4pt]
&R_{12}\leq(1-{t})\sum_{n=1}^N{\log_2\!\Big(1\!+\!\frac{\tilde{g}_{2n}p_{Rn}}{1-t}\Big)},\label{eq:bc-ieq1}\\[-4pt]
&R_{12}\leq\frac{1-{t}}{\rho}\sum_{n=1}^N{\log_2\!\Big(1\!+\!\frac{\tilde{g}_{1n}p_{Rn}}{1-t}\Big)},\label{eq:bc-ieq2}\\[-6pt]
&\sum_{n=1}^N{p_{in}}\leq P_i,\;i=1,2,R,\\[-6pt]
&0<t<1,
\end{align}
\end{subequations}
where $\bm{p}_i\triangleq(p_{i1},p_{i2},\ldots,p_{iN})\In\Rset_{+}^N$ denotes the power allocation of node $T_i$ for $i{=}1,2,R$. Problem \eqref{eq:original-problem} is a convex optimization problem, which can be solved by standard interior-point methods or by using general purpose convex solvers such as CVX \cite{Boyd09_CVX}. However, these methods quickly become computationally formidable as the number of subcarriers $N$ increases, because their complexity grows in the order of $O(N^{3.5})$ \cite{Karmarkar84}, \cite[p.~8 and Eq.~(11.29)]{Boyd04}. Since $N$ can be quite large in practical OFDM systems, we will develop a more efficient resource allocation algorithm for large values of $N$ in the sequel.

\subsection{Phase-Wise Decomposition of Problem~\eqref{eq:original-problem}}
\label{sec:phase-decomp}
Let us first fix the value of $t$. Then, problem \eqref{eq:original-problem} can be decomposed into two power allocation subproblems for the multi-access phase and the broadcast phase, respectively. Note that the transmission powers of the terminal nodes $\bm{p}_1$ and $\bm{p}_2$ are only involved in the rate constraints \eqref{eq:mac-ieq1}-\eqref{eq:mac-ieq3} for the multiple-access phase, while the transmission power of the relay node $\bm{p}_R$ is only involved in the rate constraints \eqref{eq:bc-ieq1} and \eqref{eq:bc-ieq2} for the broadcast phase. Let $R_\text{MA}$ and $R_\text{BC}$ denote the achievable rates for the multiple-access and broadcast phases, respectively. For any fixed $t$, problem \eqref{eq:original-problem} can be decomposed into the following two subproblems, one for the multiple-access phase
\vspace{-4pt}
\begin{subequations} \label{eq:MA-problem}
\begin{align}
R_\text{MA}^{\star}(t)\triangleq\max_{\bm{p}_1,\bm{p}_2\succeq\bm{0},\,R_\text{MA}}\quad&R_\text{MA} \\[-6pt]
\st~~~~\quad&R_\text{MA}\leq r_k(\bm{p}_1,\bm{p}_2),\;k=1,2,3,\label{eq:rate-constr-MA}\\[-6pt]
&\sum_{n=1}^N{p_{in}}\leq P_i,\;i=1,2, \label{eq:power-constr-MA}
\end{align}
\end{subequations}
and the other for the broadcast phase
\vspace{-4pt}
\begin{subequations} \label{eq:BC-problem}
\begin{align}
R_\text{BC}^{\star}(t)\triangleq\max_{\bm{p}_R\succeq\bm{0},\,R_\text{BC}}\quad&R_\text{BC}\\[-6pt]
\st~~~\quad&R_\text{BC}\leq r_k(\bm{p}_R),\;k=4,5,\label{eq:rate-constr-BC}\\[-6pt]
&\sum_{n=1}^N{p_{Rn}}\leq P_R, \label{eq:power-constr-BC}
\end{align}
\end{subequations}
where the rate functions $r_k(\bm{p}_1,\bm{p}_2),\,k=1,2,3$, and $r_k(\bm{p}_R),\,k=4,5$, are defined by
\vspace{-3pt}
\begin{subequations}
\begin{align}
&r_1(\bm{p}_1,\bm{p}_2)= t\sum_{n=1}^N{\log_2\!\Big(1\!+\!\frac{g_{1n}p_{1n}}{t}\Big)},\label{eq:rate-fun1}\\[-4pt]
&r_2(\bm{p}_1,\bm{p}_2)= \frac{t}{\rho}\sum_{n=1}^N{\log_2\!\Big(1\!+\!\frac{g_{2n}p_{2n}}{t}\Big)},\label{eq:rate-fun2}\\[-4pt]
&r_3(\bm{p}_1,\bm{p}_2)= \frac{t}{\rho+1}\sum_{n=1}^N{\log_2\!\Big(1\!+\!\frac{g_{1n}p_{1n}+g_{2n}p_{2n}}{t}\Big)},\label{eq:rate-fun3}\\[-4pt]
&r_4(\bm{p}_R)= (1-{t})\sum_{n=1}^N{\log_2\!\Big(1\!+\!\frac{\tilde{g}_{2n}p_{Rn}}{1-t}\Big)},\label{eq:rate-fun4}\\[-4pt]
&r_5(\bm{p}_R)= \frac{1-{t}}{\rho}\sum_{n=1}^N{\log_2\!\Big(1\!+\!\frac{\tilde{g}_{1n}p_{Rn}}{1-t}\Big)}.\label{eq:rate-fun5}
\end{align}
\end{subequations}

Then, the optimal objective value of problem \eqref{eq:original-problem} is given by
\vspace{-4pt}
\begin{equation} \label{eq:phase-decomp-mu}
R_{12}^{\star}=\max_{0<t<1}\min\!\left\{R_\text{MA}^{\star}(t),R_\text{BC}^{\star}(t)\right\},
\end{equation}
where $R_\text{MA}^{\star}(t)$ and $R_\text{BC}^{\star}(t)$ are defined in \eqref{eq:MA-problem} and \eqref{eq:BC-problem}, respectively. Since problem \eqref{eq:phase-decomp-mu} itself is a one-dimensional convex optimization problem, it can be efficiently solved by either golden section search method or the bisection method \cite[Chapter~8]{Bazaraa06}, with $R_\text{MA}^{\star}(t)$ and $R_\text{BC}^{\star}(t)$ at each search iteration obtained by solving \eqref{eq:MA-problem} and \eqref{eq:BC-problem}, respectively. Next, let us show how to solve the subproblems \eqref{eq:MA-problem} and \eqref{eq:BC-problem}, respectively.

\subsection{Lagrange Dual Optimization for Subproblem \eqref{eq:MA-problem}} \label{sec:dual-decomp}
Let us define the partial Lagrange dual function of subproblem \eqref{eq:MA-problem} as
\vspace{-4pt}
\begin{equation} \label{eq:dual-function}
D_\text{MA}(\bm{\lambda},\bm{\alpha})\triangleq \min_{\bm{p}_1,\bm{p}_2\succeq\bm{0},\,R_\text{MA}}L_\text{MA}\left(\bm{p}_1,\bm{p}_2,R_\text{MA},\bm{\lambda},\bm{\alpha}\right),
\end{equation}
where $\bm{\lambda}=(\lambda_1,\lambda_2,\lambda_3),\,\bm{\alpha}=(\alpha_1,\alpha_2)$ are nonnegative dual variables associated with three rate inequality constraints in \eqref{eq:rate-constr-MA} and two power inequality constraints in \eqref{eq:power-constr-MA}, respectively, and\!
\vspace{-4pt}
{\setlength\arraycolsep{1pt} \begin{eqnarray} \label{eq:Lagrangian0}
\hspace{-9pt}L_\text{MA}\left(\bm{p}_1,\bm{p}_2,R_\text{MA},\bm{\lambda},\bm{\alpha}\right)&&=-R_\text{MA}
+\sum_{k=1}^3{\lambda_k\big[R_\text{MA}-r_k(\bm{p}_1,\bm{p}_2)\big]}
+\sum_{i=1}^2{\alpha_i\Big(\sum_{n=1}^N{p_{in}}-P_i\Big)}
\end{eqnarray}}
$\!\!$is the partial Lagrangian of \eqref{eq:MA-problem}. Then, the corresponding dual problem is defined as
\vspace{-9pt}
\begin{equation} \label{eq:dual-problem}
\max_{\bm{\lambda}\succeq\bm{0},\,\bm{\alpha}\succeq\bm{0}}\,D_\text{MA}(\bm{\lambda},\bm{\alpha}).
\end{equation}
Since the refined Slater's condition \cite[Eq.~(5.27)]{Boyd04} is satisfied in problem \eqref{eq:MA-problem}, the duality gap between problems \eqref{eq:MA-problem} and \eqref{eq:dual-problem} is zero, i.e., solving problem \eqref{eq:dual-problem} in the dual domain will yield the optimal solution of the primal problem \eqref{eq:MA-problem}.

\subsubsection{Structure of the Optimal Dual Solution $\bm{\lambda}^{\star}$}
Prior to the presentation of our power allocation algorithm for solving the problems \eqref{eq:dual-function} and \eqref{eq:dual-problem}, we first present an important result that the optimal dual solution $\bm{\lambda}^{\star}$ satisfies the following structural property:
\begin{Proposition} \label{prop2}
There exists one optimal solution $(\bm{\lambda}^{\star},\bm{\alpha}^{\star})$ to the dual problem \eqref{eq:dual-problem}, where
$\bm{\lambda}^{\star}=(1{-}\lambda_3^{\star},0,\lambda_3^{\star})$ or $\bm{\lambda}^{\star}=(0,1{-}\lambda_3^{\star},\lambda_3^{\star})$ and $0\leq\lambda_3^{\star}\leq 1$.
\end{Proposition}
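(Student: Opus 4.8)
The plan is to use strong duality --- which holds for \eqref{eq:MA-problem} since the refined Slater condition is satisfied, as noted after \eqref{eq:dual-problem} --- together with complementary slackness, and to read off the structure of $\bm\lambda^\star$ from which rate constraints are active at a primal optimum. Fix a primal optimal point $(\bm p_1^\star,\bm p_2^\star,R_\text{MA}^\star)$ and a dual optimal point $(\bm\lambda^\star,\bm\alpha^\star)$. First I would note that $R_\text{MA}$ enters the Lagrangian \eqref{eq:Lagrangian0} only through the linear term $\big(\sum_{k=1}^3\lambda_k-1\big)R_\text{MA}$; since $R_\text{MA}$ is unconstrained, the inner minimization in \eqref{eq:dual-function} returns $-\infty$ unless $\sum_{k=1}^3\lambda_k=1$, and because strong duality forces the dual optimal value to equal the finite quantity $R_\text{MA}^\star(t)$, every dual optimum must satisfy $\sum_{k=1}^3\lambda_k^\star=1$. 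Combined with $\bm\lambda^\star\succeq\bm 0$ this already yields $0\le\lambda_k^\star\le 1$ for each $k$, so it remains only to show that some dual optimum has $\lambda_1^\star\lambda_2^\star=0$.

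For this, complementary slackness gives $\lambda_k^\star>0\Rightarrow r_k(\bm p_1^\star,\bm p_2^\star)=R_\text{MA}^\star$, so if either \eqref{eq:mac-ieq1} or \eqref{eq:mac-ieq2} is slack at \emph{some} primal optimum we are done, since the dual optimal set is the same for every primal optimum. Two reductions make this the typical situation. First, if $\sum_n p_{1n}^\star<P_1$ while \eqref{eq:mac-ieq1} is tight, then scaling $\bm p_1^\star$ up slightly keeps the value $\min\{r_1,r_2,r_3\}=R_\text{MA}^\star$ (because $r_1$ is strictly increasing in $\bm p_1$, $r_3$ is nondecreasing in $\bm p_1$, and $r_2$ does not depend on $\bm p_1$) while making \eqref{eq:mac-ieq1} slack; the symmetric statement holds for $\bm p_2^\star$ and \eqref{eq:mac-ieq2}. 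Second, if \eqref{eq:mac-ieq1} and \eqref{eq:mac-ieq2} are both tight at the primal optimum, then, writing $S_i=\sum_n\log_2(1+g_{in}p_{in}^\star/t)$ and $S_{12}=\sum_n\log_2\!\big(1+(g_{1n}p_{1n}^\star+g_{2n}p_{2n}^\star)/t\big)$, feasibility of \eqref{eq:mac-ieq3} forces $t(S_1+S_2)\le tS_{12}$, whereas the elementary inequality $\log_2(1+a)+\log_2(1+b)\ge\log_2(1+a+b)$, summed over subcarriers, gives $S_{12}\le S_1+S_2$; hence $S_{12}=S_1+S_2$ with per-subcarrier equality, i.e.\ $g_{1n}p_{1n}^\star g_{2n}p_{2n}^\star=0$ for all $n$ --- an orthogonal, FDMA-type allocation --- and \eqref{eq:mac-ieq3} is then tight as well. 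Thus everything reduces to this single degenerate configuration.

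In that FDMA configuration I would finish by again producing a dual optimum with $\lambda_1^\star=0$, either by reallocating one terminal's power within its own budget so that its individual-rate constraint becomes slack while the remaining two constraints still permit rate $R_\text{MA}^\star$, or --- more robustly --- by a sensitivity argument: the optimal value of \eqref{eq:MA-problem} as a function of relaxing \eqref{eq:mac-ieq1} by $u_1$ is concave, and I would check that it has left derivative $1$ and right derivative $0$ at $u_1=0$, so that $0$ lies in its subdifferential and is an admissible value of $\lambda_1^\star$. I expect this FDMA corner to be the only real obstacle: away from it the multiple-access rate region is a nondegenerate pentagon whose edges associated with \eqref{eq:mac-ieq1} and \eqref{eq:mac-ieq2} are separated by the sum-rate edge and so are never both active at the boundary point reached along the ray $R_{21}=\rho R_{12}$, whence complementary slackness alone delivers $\lambda_1^\star\lambda_2^\star=0$; making the degenerate case fully rigorous is where the nonroutine work lies.
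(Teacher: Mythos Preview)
Your reduction to the FDMA configuration is essentially the same as the paper's: you use $\sum_k\lambda_k^\star=1$, complementary slackness, and the inequality $r_1+\rho r_2\ge(\rho+1)r_3$ (with equality iff $p_{1n}^\star p_{2n}^\star=0$ for all $n$) to isolate the single hard case $r_1^\star=r_2^\star=r_3^\star$ with an orthogonal allocation. One minor wobble: your ``first reduction'' as stated is not quite right. Scaling $\bm p_1^\star$ up strictly increases both $r_1$ and $r_3$, so if $r_2>R_\text{MA}^\star$ the minimum would strictly increase, contradicting optimality; hence this perturbation is only available when \eqref{eq:mac-ieq2} is already tight, which is precisely the case your second reduction handles anyway. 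This does no damage to the logic, but the first reduction is not actually carving off any new cases.

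The genuine gap is the FDMA endgame, which you yourself flag as the nonroutine part. Neither of your two suggestions is complete. For the ``reallocate power'' route you would need to exhibit a feasible $\bm p_1'$ with $r_1(\bm p_1')>R_\text{MA}^\star$ while $r_3(\bm p_1',\bm p_2^\star)\ge R_\text{MA}^\star$; since both powers are typically at their budgets here and the FDMA allocation need not be water-filling for $r_1$, it is not clear such a $\bm p_1'$ exists. For the sensitivity route, asserting that the right derivative of $V(u_1)$ is $0$ is equivalent to asserting that $\lambda_1=0$ is dual optimal, so the claim is circular unless you actually prove that relaxing \eqref{eq:mac-ieq1} cannot raise the optimal value --- which again requires a nontrivial argument about the interplay of $r_2$ and $r_3$.

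The paper closes this gap by a direct dual construction, avoiding any primal perturbation. In the FDMA case each subcarrier has $p_{1n}^\star p_{2n}^\star=0$, so the stationarity conditions \eqref{eq:MA-KKT} collapse: on subcarriers with $p_{2n}^\star=0$ only the combination $\lambda_3+(\rho+1)\lambda_1$ appears, and on subcarriers with $p_{1n}^\star=0$ only $\rho\lambda_3+(\rho+1)\lambda_2$ appears. Together with $\lambda_1+\lambda_2+\lambda_3=1$ and complementary slackness (all three rate constraints are tight), any $\tilde{\bm\lambda}\succeq\bm0$ matching these three linear combinations is also dual optimal with the same $\bm\alpha^\star$. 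If $\lambda_1^\star\ge\lambda_2^\star/\rho$ one takes $\tilde{\bm\lambda}=\big(\lambda_1^\star-\tfrac{1}{\rho}\lambda_2^\star,\,0,\,\lambda_3^\star+\tfrac{\rho+1}{\rho}\lambda_2^\star\big)$; otherwise $\hat{\bm\lambda}=\big(0,\,\lambda_2^\star-\rho\lambda_1^\star,\,\lambda_3^\star+(\rho+1)\lambda_1^\star\big)$. Either way the new multiplier has $\lambda_1\lambda_2=0$, and the KKT system is preserved verbatim. This constructive step is what your proposal is missing.
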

\noindent\emph{Proof:}
See Appendix~\ref{sec:proof-prop2}. \hfill{$\blacksquare$}

Proposition~\ref{prop2} is very useful for developing our power allocation algorithm, because the search region for $\bm{\lambda}^{\star}$ can be confined to a set $\bm{\Lambda}_1\bigcup\bm{\Lambda}_2$, where $\bm{\Lambda}_1$ and $\bm{\Lambda}_2$ are two one-dimensional dual sets defined by
\vspace{-9pt}
\begin{subequations} \label{eq:lambda-set}
\begin{align}
\bm{\Lambda}_1\triangleq\{\bm{\lambda}\In\Rset_+^3\mid\bm{\lambda}= (1-\lambda_3,0,\lambda_3),\;0\leq\lambda_3\leq1\},\label{eq:lambda-set1}\\[-6pt]
\bm{\Lambda}_2\triangleq\{\bm{\lambda}\In\Rset_+^3\mid\bm{\lambda}= (0,1-\lambda_3,\lambda_3),\;0\leq\lambda_3\leq1\}.\label{eq:lambda-set2}\\[-30pt]\nonumber
\end{align}
\end{subequations}
In the sequel, we will show that finding solutions to both problems \eqref{eq:dual-function} and \eqref{eq:dual-problem} can be substantially simplified by virtue of Proposition~\ref{prop2}.

\subsubsection{Primal Solution to Problem \eqref{eq:dual-function}} \label{sec:Lagrangian-n}
As the first important application of Proposition~\ref{prop2}, we show that the structure of $\bm{\lambda}^{\star}$ can be exploited to simplify the primal solution to problem \eqref{eq:dual-function}. For any given dual variables $(\bm{\lambda},\bm{\alpha})$, the optimal power allocation solution $(p_{1n}^{\star},p_{2n}^{\star})$ to problem \eqref{eq:dual-function} is determined by the following Karush-Kuhn-Tucker (KKT) conditions:
\vspace{-4pt}
\begin{subequations} \label{eq:MA-KKT}
\begin{align}
\frac{\partial{L_{\text{MA}}}}{\partial{p_{1n}}}=
\alpha_1\!-\!\frac{t{g_{1n}}\lambda_3/(\rho+1)}{(t+g_{1n}p_{1n}^{\star}+g_{2n}p_{2n}^{\star})\ln2}
-\frac{t{g_{1n}}\lambda_1}{(t+g_{1n}p_{1n}^{\star})\ln2}
&\;\left\{\begin{array}{ll}
\geq0,&\;\text{if}\;p_{1n}^{\star}=0; \\[2pt]
=0,&\;\text{if}\;p_{1n}^{\star}>0,
\end{array}\right. \label{eq:MA-KKT1}\\
\frac{\partial{L_{\text{MA}}}}{\partial{p_{2n}}}=
\alpha_2\!-\!\frac{t{g_{2n}}\lambda_3/(\rho+1)}{(t+g_{1n}p_{1n}^{\star}+g_{2n}p_{2n}^{\star})\ln2}
-\!\frac{t{g_{2n}}\lambda_2/\rho}{(t+g_{2n}p_{2n}^{\star})\ln2}
&\;\left\{\begin{array}{ll}
\geq0,&\;\text{if}\;p_{2n}^{\star}=0; \\[2pt]
=0,&\;\text{if}\;p_{2n}^{\star}>0.
\end{array}\right. \label{eq:MA-KKT2}
\end{align}
\end{subequations}

According to Proposition~\ref{prop2}, at least one of $\lambda_1^{\star}$ and $\lambda_2^{\star}$ is 0, which can be utilized to simplify the KKT conditions \eqref{eq:MA-KKT}. The attained optimal $(p_{1n}^{\star},p_{2n}^{\star})$ is provided in the following four cases:

\textbf{Case~1}: $p_{1n}^{\star}>0,\,p_{2n}^{\star}>0$.
If $\bm{\lambda}=(1{-}\lambda_3,0,\lambda_3)$, then
\vspace{-6pt}
\begin{subequations} \label{eq:pa-case1}
\begin{align}
p_{1n}^{\star}&=~\frac{t(1-\lambda_3)}{(\alpha_1-\frac{g_{1n}}{g_{2n}}\alpha_2)\ln2}-\!\frac{t}{g_{1n}},\\[-4pt]
p_{2n}^{\star}&=~\frac{t\lambda_3}{(\rho+1)\alpha_2\ln2}-\frac{t(1-\lambda_3)}{(\frac{g_{2n}}{g_{1n}}\alpha_1-\alpha_2)\ln2};
\end{align}
\end{subequations}
otherwise, if $\bm{\lambda}=(0,1{-}\lambda_3,\lambda_3)$, then
\vspace{-6pt}
\begin{subequations} \label{eq:pa-case1-2}
\begin{align}
p_{1n}^{\star}&=~\frac{t\lambda_3}{(\rho+1)\alpha_1\ln2}-\frac{t(1-\lambda_3)}{\rho(\frac{g_{1n}}{g_{2n}}\alpha_2-\alpha_1)\ln2},\\[-6pt]
p_{2n}^{\star}&=~\frac{t(1-\lambda_3)}{\rho(\alpha_2-\frac{g_{2n}}{g_{1n}}\alpha_1)\ln2}-\!\frac{t}{g_{2n}}.
\end{align}
\end{subequations}
Case 1 happens if $p_{1n}^{\star}p_{2n}^{\star}>0$ is satisfied in \eqref{eq:pa-case1} or \eqref{eq:pa-case1-2}.

\textbf{Case~2}: $p_{1n}^{\star}>0,\,p_{2n}^{\star}=0$. Then the solutions to \eqref{eq:MA-KKT1} and \eqref{eq:MA-KKT2} are given by
\vspace{-2pt}
\begin{subequations} \label{eq:pa-case2}
\begin{align}
p_{1n}^{\star}&=~\frac{t[(\rho+1)\lambda_1+\lambda_3]}{(\rho+1)\alpha_1\ln2}-\!\frac{t}{g_{1n}}, \label{eq:pa-case2-p1}\\[-9pt]
p_{2n}^{\star}&=~0.
\end{align}
\end{subequations}
This case happens if $p_{1n}^{\star}>0$ is satisfied in \eqref{eq:pa-case2-p1}.

\textbf{Case~3}: $p_{1n}^{\star}=0,\,p_{2n}^{\star}>0$. Then the solutions to \eqref{eq:MA-KKT1} and \eqref{eq:MA-KKT2} are given by
\vspace{-9pt}
\begin{subequations} \label{eq:pa-case3}
\begin{align}
p_{1n}^{\star}&=~0,\\[-4pt]
p_{2n}^{\star}&=~\frac{t[(\rho+1)\lambda_2+\rho\lambda_3]}{\rho(\rho+1)\alpha_2\ln2}-\!\frac{t}{g_{2n}}. \label{eq:pa-case3-p2}
\end{align}
\end{subequations}
This case happens if $p_{2n}^{\star}>0$ is satisfied in \eqref{eq:pa-case3-p2}.

\textbf{Case~4}: $p_{1n}^{\star}=0,\,p_{2n}^{\star}=0$. This is the default case when the above 3 cases do not happen.

\noindent\textbf{Remark~1}~
If the structural property of $\bm{\lambda}^{\star}$ in Proposition~\ref{prop2} is not available, one can still obtain an alternative closed-form solution to \eqref{eq:MA-KKT} \cite{He_ICC12}. However, this solution involves solving a more difficult cubic equation, which is presented in Appendix~\ref{sec:cubic-eq}. Nevertheless, these two closed-form solutions are much simpler than the iterative power allocation procedure proposed in \cite{Jitvan_TVT09} for the per-subcarrier DF relay strategy.

\ifreport
\noindent\textbf{Remark~2}~
Since the Lagrangian \eqref{eq:Lagrangian0} is not strictly convex with respect to the primal power variables at some dual points, the power allocation solution in \eqref{eq:pa-case1}-\eqref{eq:pa-case3} may be non-unique at those dual points. For example, if $\lambda_3=1$ and $\alpha_1=\frac{g_{1n}}{g_{2n}}\alpha_2$, the power allocation solution in either of \eqref{eq:pa-case1} and \eqref{eq:pa-case1-2} is indeterminate due to the presence of $\frac{\,0}{\,0}$ form. In fact, any nonnegative $(p_{1n}^{\star},p_{2n}^{\star})$ satisfying $g_{1n}p_{1n}^{\star}+g_{2n}p_{2n}^{\star}=\frac{t g_{1n}}{\alpha_1(\rho+1)\ln2}-1$ is a solution to \eqref{eq:MA-KKT} in this case. Nevertheless, any one of the optimal primal power solutions can be used to derive the subgradient for solving the dual problem \eqref{eq:dual-problem} \cite[Section~6.1]{Bertsekas99}. After the optimal dual point $(\bm{\lambda}^{\star},\bm{\alpha}^{\star})$ is obtained, we still need to recover a feasible solution to the primal problem \eqref{eq:MA-problem} \cite{Bertsekas99,Xiao_TCOM04,Yu_TCOM06}. According to \cite[Proposition~5.1.1]{Bertsekas99}, this can be accomplished by incorporating the feasibility conditions \eqref{eq:MA-KKT-feasible1} and the complementary slackness conditions \eqref{eq:MA-KKT-slackness1}, which are contained in the KKT conditions of problem \eqref{eq:MA-problem} given in Appendix~\ref{sec:proof-prop2}, into the power allocation solutions \eqref{eq:pa-case1}-\eqref{eq:pa-case3}, which involves solving a system of linear equations and inequalities.
\else
\noindent\textbf{Remark~2}~
Since the Lagrangian \eqref{eq:Lagrangian0} is not strictly convex with respect to the primal power variables at some dual points, the power allocation solution in \eqref{eq:pa-case1}-\eqref{eq:pa-case3} may be non-unique at those dual points. Nevertheless, any one of the optimal primal power solutions can be used to derive the subgradient for solving the dual problem \eqref{eq:dual-problem} \cite[Section~6.1]{Bertsekas99}. After the optimal dual point $(\bm{\lambda}^{\star},\bm{\alpha}^{\star})$ is obtained, extra processing may be needed to obtain the optimal primal solutions to \eqref{eq:MA-problem} by using the KKT conditions \cite{Bertsekas99,Xiao_TCOM04,Yu_TCOM06}. More details are given in our online technical report \cite{Fei_TWCreport} due to space limit.
\fi

\subsubsection{Dual Solution to Problem \eqref{eq:dual-problem}} \label{sec:dual-max}
We now solve the dual problem \eqref{eq:dual-problem} by a two-level optimization approach \cite{Jitvan_TVT09}, which first fixes $\bm{\lambda}$ and searches for the optimal solution $\bm{\alpha}^{\star}(\bm{\lambda})$ to the maximization problem
\vspace{-4pt}
\begin{equation} \label{eq:alpha-max}
G_\text{MA}(\bm{\lambda})\triangleq\max_{\bm{\alpha}\succeq\bm{0}}\,D_\text{MA}(\bm{\lambda},\bm{\alpha}),
\end{equation}
and then optimizes $\bm{\lambda}$ by
\vspace{-6pt}
\begin{equation} \label{eq:lambda-max}
\bm{\lambda}^{\star}\triangleq\arg\max_{\bm{\lambda}\succeq\bm{0}}\,G_\text{MA}(\bm{\lambda}).
\end{equation}
The inner-level optimization problem \eqref{eq:alpha-max} is solved by an ellipsoid method \cite{Boyd07_ellip} summarized in Algorithm~\ref{alg1}, where the subgradient of the dual problem $D_\text{MA}(\bm{\lambda},\bm{\alpha})$ with respect to $\bm{\alpha}$ is given by \cite[Proposition~6.1.1]{Bertsekas99}
\vspace{-4pt}
\begin{equation} \label{eq:subgradient-alpha}
\bm{\eta}(\bm{\lambda},\bm{\alpha})=\left(\sum_{n=1}^N{p_{1n}^{\star}}-P_1,\sum_{n=1}^N{p_{2n}^{\star}}-P_2\right),
\end{equation}
where $(p_{1n}^{\star},p_{2n}^{\star})$ is the optimal power allocation solution obtained by \eqref{eq:pa-case1}-\eqref{eq:pa-case3}.
\ifreport
More details about the initialization of $\bm{\alpha}$ and the matrix $\mathbf{A}$ in Algorithm~\ref{alg1} are given in Appendix~\ref{sec:alpha-bounds}.
\else
More details about the initialization of $\bm{\alpha}$ and the matrix $\mathbf{A}$ in Algorithm~\ref{alg1} are given in \cite{Fei_TWCreport}.
\fi

By Proposition~\ref{prop2}, the outer-level optimization problem \eqref{eq:lambda-max} can be solved by searching for $\bm{\lambda}^{\star}$ over the set $\bm{\Lambda}_1\bigcup\bm{\Lambda}_2$, i.e.,
\vspace{-9pt}
\begin{equation} \label{eq:lambda-max1}
\bm{\lambda}^{\star}=\arg\max_{\bm{\lambda}\In\bm{\Lambda}_1\bigcup\bm{\Lambda}_2}\,G_\text{MA}(\bm{\lambda}).
\end{equation}

In order to solve the reduced outer-level optimization problem \eqref{eq:lambda-max1}, we first need the subgradient of the objective function $G_\text{MA}(\bm{\lambda})$.
According to \cite{Bertsekas99} and \cite[Corollary~4.5.3]{Hiriart01}, one subgradient of $G_\text{MA}(\bm{\lambda})$ in \eqref{eq:alpha-max} is given by
\vspace{-4pt}
\begin{equation} \label{eq:subgradient-lambda}
\bm{\xi}(\bm{\lambda})= \left(R_\text{MA}^{\star}-r_1,R_\text{MA}^{\star}-r_2,R_\text{MA}^{\star}-r_3\right),
\end{equation}
where $R_\text{MA}^{\star}=\min\{r_1,r_2,r_3\}$, and $r_k~(k{=}1,2,3)$ are the rate functions \eqref{eq:rate-fun1}-\eqref{eq:rate-fun3} associated with the optimal primal power allocation solution \eqref{eq:pa-case1}-\eqref{eq:pa-case3} obtained at the dual point $(\bm{\lambda},\bm{\alpha}^{\star}(\bm{\lambda}))$, respectively, and $\bm{\alpha}^{\star}(\bm{\lambda})$ is the optimal solution to \eqref{eq:alpha-max}.

With the subgradient $\bm{\xi}(\bm{\lambda})$ of $G_\text{MA}(\bm{\lambda})$, we are ready to solve the outer-level optimization problem \eqref{eq:lambda-max1}. Instead of searching both the sets $\bm{\Lambda}_1$ and $\bm{\Lambda}_2$, we propose a simple testing method to determine whether $\bm{\lambda}^{\star}\In\bm{\Lambda}_1$ or $\bm{\lambda}^{\star}\In\bm{\Lambda}_2$. Noticing that $\bm{\Lambda}_1\bigcap\bm{\Lambda}_2\,{=}\,\{(0,0,1)\}$, let us consider a testing method at the dual point $\bm{\lambda}^0=(0,0,1)$. By the concavity of the dual function $D_\text{MA}(\bm{\lambda},\bm{\alpha})$, $G_\text{MA}(\bm{\lambda})$ is also concave in $\bm{\lambda}$, which implies \cite[Eq.~(B.21)]{Bertsekas99}
\begin{algorithm}
\caption{The ellipsoid method for solving the inner-level problem \eqref{eq:alpha-max}} \label{alg1}
\begin{algorithmic}[1]
\State \textbf{Input} CSI $\{g_{1n},g_{2n}\}_{n=1}^{N}$, average powers $\{P_1,P_2\}$, rate ratio $\rho$, time proportion $t$, and $\bm{\lambda}$.
\State Initialize $\bm{\alpha}$ and a $2\times2$ positive definite matrix $\mathbf{A}$ that define the ellipsoid $\mathcal{E}(\bm{\alpha},\mathbf{A})=\{\bm{x}\In\Rset_+^2\mid(\bm{x}-\bm{\alpha})^T
\mathbf{A}^{-1}(\bm{x}-\bm{\alpha})\leq1\}$.
\Repeat
\State Compute the optimal $(p_{1n}^\star,p_{2n}^\star)$ by \eqref{eq:pa-case1}-\eqref{eq:pa-case3} for given $(\bm{\lambda},\bm{\alpha})$ and $t$.
\State Compute the subgradient $\bm{\eta}(\bm{\lambda},\bm{\alpha})$ with respect to $\bm\alpha$ by \eqref{eq:subgradient-alpha}.
\State Update the ellipsoid:
(a) $\widetilde{\bm{\eta}}:=\frac{1}{\sqrt{\bm{\eta}^T\mathbf{A}\bm{\eta}}}\bm{\eta}$;
(b) $\bm{\alpha}:=\bm{\alpha}-\frac{1}{3}\mathbf{A}\widetilde{\bm{\eta}}$;
(c) $\mathbf{A}:=\frac{4}{3}\left(\mathbf{A}{-}\frac{2}{3}\mathbf{A}\widetilde{\bm{\eta}}
\widetilde{\bm{\eta}}^T\mathbf{A}\right)$.
\Until $\bm{\alpha}$ converges to $\bm{\alpha}^{\star}(\bm{\lambda})$.
\State \textbf{Output} the optimal dual variable $\bm{\alpha}^{\star}(\bm{\lambda})$ for given $\bm{\lambda}$.
\end{algorithmic}
\end{algorithm}
\vspace{-4pt}
\begin{equation} \label{eq:subgradient-lambda-def}
G_\text{MA}(\bm{\lambda})\leq G_\text{MA}(\bm{\lambda}^0)+(\bm{\lambda}-\bm{\lambda}^0)^T\bm{\xi}(\bm{\lambda}^0),
\quad\forall~\bm{\lambda}\In\bm{\Lambda}_1\,{\textstyle \bigcup}\,\bm{\Lambda}_2.
\end{equation}
Suppose that $\bm{\lambda}^{\star}$ is an optimal solution to \eqref{eq:lambda-max1}, i.e., $G_\text{MA}(\bm{\lambda}^{\star})\geq G_\text{MA}(\bm{\lambda}^0)$. Then, by \eqref{eq:subgradient-lambda-def}, we must have
\vspace{-9pt}
\begin{equation} \label{eq:subgradient-lambda-optimal}
(\bm{\lambda}^{\star}-\bm{\lambda}^0)^T\bm{\xi}(\bm{\lambda}^0)\geq0
\end{equation}
for the optimal dual point $\bm{\lambda}^{\star}$.
In other words, if a dual point $\bm{\lambda}$ satisfies $(\bm{\lambda}-\bm{\lambda}^0)^T\bm{\xi}(\bm{\lambda}^0)\,{<}\,0$, then $\bm{\lambda}$ cannot be an optimal solution to problem \eqref{eq:lambda-max1}. Due to this and \eqref{eq:subgradient-lambda}, we establish the following proposition:

\begin{Proposition} \label{prop3}
Let $r_k~(k{=}1,2,3)$ denote the values of the terms used in the subgradient $\bm{\xi}(\bm{\lambda})$ in \eqref{eq:subgradient-lambda} with $\bm{\lambda}=\bm{\lambda}^0$.
If $r_3\geq r_1$, then $\bm{\lambda}^{\star}\In\bm{\Lambda}_1$. If $r_3\geq r_2$, then $\bm{\lambda}^{\star}\In\bm{\Lambda}_2$. If both $r_3\geq r_1$ and $r_3\geq r_2$, then
$\bm{\lambda}^{\star}=\bm{\lambda}^0=(0,0,1)$.
\end{Proposition}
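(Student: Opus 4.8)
The plan is to locate $\bm{\lambda}^{\star}$ by combining Proposition~\ref{prop2} with the necessary optimality condition \eqref{eq:subgradient-lambda-optimal} already established. By Proposition~\ref{prop2} we may take $\bm{\lambda}^{\star}\In\bm{\Lambda}_1\bigcup\bm{\Lambda}_2$, and since $\bm{\lambda}^0=(0,0,1)\In\bm{\Lambda}_1\bigcap\bm{\Lambda}_2$ is feasible for \eqref{eq:lambda-max1}, any $\bm{\lambda}$ with $(\bm{\lambda}-\bm{\lambda}^0)^{T}\bm{\xi}(\bm{\lambda}^0)<0$ is excluded as a candidate optimum, as noted just above. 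So the first step is to evaluate this inner product along each of the two one-dimensional segments.

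A short computation does the job. For $\bm{\lambda}=(1-\lambda_3,0,\lambda_3)\In\bm{\Lambda}_1$ we have $\bm{\lambda}-\bm{\lambda}^0=(1-\lambda_3)(1,0,-1)$, so by \eqref{eq:subgradient-lambda}
\[
(\bm{\lambda}-\bm{\lambda}^0)^{T}\bm{\xi}(\bm{\lambda}^0)=(1-\lambda_3)\big[(R_\text{MA}^{\star}-r_1)-(R_\text{MA}^{\star}-r_3)\big]=(1-\lambda_3)(r_3-r_1),
\]
and identically $(\bm{\lambda}-\bm{\lambda}^0)^{T}\bm{\xi}(\bm{\lambda}^0)=(1-\lambda_3)(r_3-r_2)$ for $\bm{\lambda}=(0,1-\lambda_3,\lambda_3)\In\bm{\Lambda}_2$. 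Hence if $r_3<r_2$, every point of $\bm{\Lambda}_2$ with $\lambda_3<1$ is excluded, so $\bm{\lambda}^{\star}$ can only lie in $\bm{\Lambda}_1$ (which still contains $\bm{\lambda}^0$); symmetrically $r_3<r_1$ forces $\bm{\lambda}^{\star}\In\bm{\Lambda}_2$. This already proves the proposition whenever exactly one of $r_3\ge r_1$, $r_3\ge r_2$ holds, and it reduces the ``both'' assertion to showing $\bm{\lambda}^{\star}=\bm{\lambda}^0$ in that remaining regime.

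The case $r_3\ge r_1$ \emph{and} $r_3\ge r_2$ is where \eqref{eq:subgradient-lambda-optimal} gives no information, and it is the step I expect to be the main obstacle. Here I would bring in the elementary subadditivity bound $\log_2(1+x+y)\le\log_2(1+x)+\log_2(1+y)$ for $x,y\ge0$: summing it over the subcarriers at the power allocation associated with $\bm{\lambda}^0$ and comparing with \eqref{eq:rate-fun1}--\eqref{eq:rate-fun3} gives $(\rho+1)r_3\le r_1+\rho r_2$, which together with $r_3\ge r_1$ and $r_3\ge r_2$ is forced to hold with equality, so $r_1=r_2=r_3$. But at $\bm{\lambda}^0$ the associated allocation maximizes $r_3$ over the feasible set of \eqref{eq:MA-problem} (the multipliers of the other two rate constraints being zero), so $r_3$ equals the largest attainable value of \eqref{eq:rate-fun3}; combined with $r_1=r_2=r_3$, this allocation --- paired with $R_\text{MA}=r_3$ --- attains $\min\{r_1,r_2,r_3\}$ at that largest value and is therefore primal-optimal for \eqref{eq:MA-problem}. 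By zero duality gap, $\bm{\lambda}^0$ is then an optimal dual point, i.e.\ it attains the maximum in \eqref{eq:lambda-max1}; being in $\bm{\Lambda}_1\bigcap\bm{\Lambda}_2$, it may be taken as $\bm{\lambda}^{\star}$, which simultaneously yields $\bm{\lambda}^{\star}\In\bm{\Lambda}_1$, $\bm{\lambda}^{\star}\In\bm{\Lambda}_2$, and $\bm{\lambda}^{\star}=\bm{\lambda}^0$. Piecing the cases together gives the three assertions of the proposition; the delicate point throughout is recognizing that the ``both'' regime is exactly the degenerate one in which $r_1=r_2=r_3$, and converting that equality into dual optimality of $\bm{\lambda}^0$ via strong duality of \eqref{eq:MA-problem}.
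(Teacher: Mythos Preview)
Your proof is correct and follows essentially the same route as the paper's. Both proofs compute the inner product $(\bm{\lambda}-\bm{\lambda}^0)^T\bm{\xi}(\bm{\lambda}^0)$ along $\bm{\Lambda}_1$ and $\bm{\Lambda}_2$, use the optimality condition \eqref{eq:subgradient-lambda-optimal} to exclude one segment, and invoke the inequality $(\rho+1)r_3\le r_1+\rho r_2$ (your subadditivity bound is exactly \eqref{eq:MA-rate-ineq}) to force $r_1=r_2=r_3$ in the ``both'' regime. The only organizational difference is that the paper splits first on whether \eqref{eq:MA-rate-ineq} is strict or an equality, whereas you split on ``exactly one holds'' versus ``both hold''; these partitions cover the same ground.

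The one place the paper is crisper is the conclusion of the ``both'' case. Once $r_1=r_2=r_3$, the paper simply notes that $R_{\text{MA}}^\star=\min\{r_1,r_2,r_3\}=r_k$ for each $k$, so $\bm{\xi}(\bm{\lambda}^0)=\bm{0}$; plugging this into the subgradient inequality \eqref{eq:subgradient-lambda-def} gives $G_{\text{MA}}(\bm{\lambda})\le G_{\text{MA}}(\bm{\lambda}^0)$ for all $\bm{\lambda}$, and optimality of $\bm{\lambda}^0$ follows in one line. Your route through primal optimality of the allocation and strong duality reaches the same endpoint but is longer and leans on the claim that the allocation at $(\bm{\lambda}^0,\bm{\alpha}^\star(\bm{\lambda}^0))$ actually attains $\max r_3$ over the feasible set---true when the Lagrangian minimizer is unique, but needing a word of justification otherwise (cf.\ Remark~2). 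The zero-subgradient argument sidesteps that technicality entirely.
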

\noindent\emph{Proof:} See Appendix~\ref{sec:proof-prop3}. \hfill{$\blacksquare$}

The procedure for solving \eqref{eq:lambda-max1} is given as follows: First, we utilize the preceding testing method stated in Proposition~\ref{prop3} to determine whether $\bm{\lambda}^{\star}\In\bm{\Lambda}_1$ or $\bm{\lambda}^{\star}\In\bm{\Lambda}_2$. Then, we use the bisection method to find the optimal dual variable $\bm{\lambda}^{\star}$. If $\bm{\lambda}^{\star}{=}(1-\lambda_3^{\star},0,\lambda_3^{\star})\In\bm{\Lambda}_1$, the directional subgradient $\zeta({\lambda}_3)$ of $G_\text{MA}(\bm{\lambda})$ along the direction of $\bm{\Lambda}_1$ is determined by
\begin{equation} \label{eq:subgradient-lambda3-1}
\zeta(\lambda_3)= \bm{\xi}(\bm{\lambda})^T\frac{\partial{\bm{\lambda}}}{\partial{\lambda_3}}=\bm{\xi}(\bm{\lambda})^T (-1,0,1)=r_1-r_3;
\end{equation}
otherwise, if $\bm{\lambda}^{\star}=(0,1-\lambda_3^{\star},\lambda_3^{\star})\In\bm{\Lambda}_2$, the directional subgradient $\zeta({\lambda}_3)$ along the direction of $\bm{\Lambda}_2$ is determined by
\begin{algorithm}
\caption{Proposed duality-based algorithm for solving subproblem \eqref{eq:MA-problem}} \label{alg2}
\begin{algorithmic}[1]  
\State \textbf{Input} CSI $\{g_{1n},g_{2n}\}_{n=1}^{N}$, average powers $\{P_1,P_2\}$, rate ratio $\rho$, and time proportion $t$.
\State Check whether $\bm{\lambda}^{\star}\In\bm{\Lambda}_1$ or $\bm{\lambda}^{\star}\In\bm{\Lambda}_2$ by Proposition~\ref{prop3}. If $\bm{\lambda}^{\star}=\bm{\lambda}^0=(0,0,1)$, go to Step~\ref{alg2-output}; otherwise, find $\bm{\lambda}^{\star}$ by the bisection method in Steps~\ref{alg2-bisection}-\ref{alg2-bisection1}.
\State Initialize $\lambda_{3,\min}=0,\,\lambda_{3,\max}=1$.\label{alg2-bisection}
\Repeat
\State Update $\lambda_3=\frac{1}{2}(\lambda_{3,\min}+\lambda_{3,\max})$.
\State Derive $\bm{\alpha}^{\star}(\bm{\lambda})$ for the inner-level dual optimization problem \eqref{eq:alpha-max} by Algorithm~\ref{alg1}.
\State Compute the subgradient $\bm{\xi}(\bm{\lambda})$ by \eqref{eq:subgradient-lambda} and the subgradient $\zeta(\lambda_3)$ by either \eqref{eq:subgradient-lambda3-1} or \eqref{eq:subgradient-lambda3-2}.
\State If the subgradient $\zeta(\lambda_3)<0$, then update $\lambda_{3,\max}=\lambda_3$; else update $\lambda_{3,\min}=\lambda_3$.
\Until $\lambda_3$ converges.\label{alg2-bisection1}
\State Obtain the optimal $\{\bm{p}_1^{\star},\bm{p}_2^{\star}\}$ by \eqref{eq:pa-case1}-\eqref{eq:pa-case3} and Remark~2. \label{alg2-output}
\State \textbf{Output} the optimal power allocation solution $\{\bm{p}_1^{\star},\bm{p}_2^{\star}\}$ and the optimal rate $R_{\text{MA}}^{\star}(t)$.
\end{algorithmic}
\end{algorithm}
\vspace{-9pt}
\begin{equation} \label{eq:subgradient-lambda3-2}
\zeta(\lambda_3)=\bm{\xi}(\bm{\lambda})^T(0,-1,1)=r_2-r_3.
\end{equation}
Since $G_\text{MA}(\bm{\lambda})$ is concave in $\bm{\lambda}$, it is also concave along the direction of $\bm{\Lambda}_1$ (or $\bm{\Lambda}_2$). Thus, $\zeta(\lambda_3)$ is monotonically non-increasing with respect to $\lambda_3$. Therefore, we can use the bisection method to search for the optimal solution $\lambda_3^{\star}$ to \eqref{eq:lambda-max}, which satisfies $\zeta({\lambda}_3^{\star})=0$, if $0<\lambda_3^{\star}<1$; $\zeta({\lambda}_3^{\star})\leq0$, if $\lambda_3^{\star}=0$; or $\zeta({\lambda}_3^{\star})\geq0$, if $\lambda_3^{\star}=1$.
The obtained algorithm for solving subproblem \eqref{eq:MA-problem} is summarized in Algorithm~\ref{alg2}.

\subsection{Lagrange Dual Optimization for Subproblem~\eqref{eq:BC-problem}} \label{sec:BC-problem}
The partial Lagrange dual function of subproblem \eqref{eq:BC-problem} is defined as
\vspace{-9pt}
\begin{equation} \label{eq:dual-function-BC}
D_\text{BC}(\lambda_4,\lambda_5,\alpha_3)\triangleq \min_{\bm{p}_R\succeq\bm{0},R_\text{BC}}L_\text{BC}\left(\bm{p}_R,R_\text{BC},\lambda_4,\lambda_5,\alpha_3\right),
\end{equation}
where $\lambda_4,\lambda_5$ and $\alpha_3$ are the nonnegative dual variables associated with two rate inequality constraints in \eqref{eq:rate-constr-BC} and one power inequality constraint \eqref{eq:power-constr-BC}, respectively, and
\vspace{-4pt}
\begin{eqnarray} \label{eq:Lagrangian-BC}
L_\text{BC}\left(\bm{p}_R,R_\text{BC},\lambda_4,\lambda_5,\alpha_3\right)
=-R_\text{BC}+\sum_{k=4}^5{\lambda_k\big[R_\text{BC}-r_k(\bm{p}_R)\big]}
+\alpha_3\Big(\sum_{n=1}^N{p_{Rn}}-P_R\Big).
\end{eqnarray}

Then, the corresponding dual optimization problem is defined as
\vspace{-9pt}
\begin{equation} \label{eq:dual-problem-BC}
\max_{\lambda_4\geq0,\lambda_5\geq0,\alpha_3\geq0}\;D_\text{BC}(\lambda_4,\lambda_5,\alpha_3).
\end{equation}

\begin{algorithm}
\caption{Proposed duality-based algorithm for solving subproblem \eqref{eq:BC-problem}} \label{alg3}
\begin{algorithmic}[1]  
\State \textbf{Input} CSI $\{\tilde{g}_{1n},\tilde{g}_{2n}\}_{n=1}^{N}$, average power $P_R$, rate ratio $\rho$, and time proportion $t$.
\State Initialize $\lambda_{5,\min}=0,\,\lambda_{5,\max}=1$.
\Repeat
\State Update $\lambda_5=\frac{1}{2}(\lambda_{5,\min}+\lambda_{5,\max})$ and initialize $\alpha_{3,\min},\,\alpha_{3,\max}$ with given $\lambda_5$.
\Repeat
\State Update $\alpha_3=\frac{1}{2}(\alpha_{3,\min}+\alpha_{3,\max})$.
\State Obtain the optimal $\bm{p}_R^{\star}$ by solving \eqref{eq:pa-BC0} at the dual point $(1-\lambda_5,\lambda_5,\alpha_3)$.
\State If $\sum_{n=1}^N{p_{Rn}^{\star}}<P_R$, then update $\alpha_{3,\max}=\alpha_3$; else update $\alpha_{3,\min}=\alpha_3$.
\Until $\alpha_3$ converges to $\alpha_3^{\star}(\lambda_5)$.
\State Obtain the optimal $\bm{p}_R^{\star}$ by solving \eqref{eq:pa-BC0} at the dual point $(1-\lambda_5,\lambda_5,\alpha_3^{\star}(\lambda_5))$.
\State If $r_4(\bm{p}_R^{\star})<r_5(\bm{p}_R^{\star})$, then update $\lambda_{5,\max}=\lambda_5$; else update $\lambda_{5,\min}=\lambda_5$.
\Until $\lambda_5$ converges.
\State Obtain the optimal $\bm{p}_R^{\star}$ by \eqref{eq:pa-BC0}.
\State \textbf{Output} the optimal power allocation solution $\bm{p}_R^{\star}$ and the optimal rate $R_{\text{BC}}^{\star}(t)$.
\end{algorithmic}
\end{algorithm}

The KKT conditions associated with \eqref{eq:dual-function-BC} are given by
\vspace{-4pt}
\begin{subequations} \label{eq:BC-KKT}
\begin{align}
&\frac{\partial{L_{\text{BC}}}}{\partial{p_{Rn}}}=
\alpha_3\!-\!\frac{(1-t){\tilde{g}_{2n}}\lambda_4}{(1-t+\tilde{g}_{2n}p_{Rn}^{\star})\ln2}
-\frac{(1-t){\tilde{g}_{1n}}\lambda_5}{\rho(1-t+\tilde{g}_{1n}p_{Rn}^{\star})\ln2}
\;\left\{\begin{array}{ll}
\geq0,&\;\text{if}\;p_{Rn}^{\star}=0; \\[2pt]
=0,&\;\text{if}\;p_{Rn}^{\star}>0,
\end{array}\right. \label{eq:BC-KKT1}\\
&\frac{\partial{L_{\text{BC}}}}{\partial{R_{\text{BC}}}}=
\lambda_4^{\star}+\lambda_5^{\star}-1=0. \label{eq:BC-KKT2}
\end{align}
\end{subequations}
If $p_{Rn}^{\star}>0$, then the equality in \eqref{eq:BC-KKT1} holds, and the optimal $p_{Rn}^{\star}$ can be shown to be the positive root $x$ of the following quadratic
equation \cite{Jitvan_TVT09}
\begin{equation}\label{eq:pa-BC0}
\frac{(1-t){\tilde{g}_{2n}}\lambda_4}{1-t+\tilde{g}_{2n}x}
+\frac{(1-t){\tilde{g}_{1n}}\lambda_5}{\rho(1-t+\tilde{g}_{1n}x)}=\alpha_3\ln2.
\end{equation}
If \eqref{eq:pa-BC0} has no positive root, then $p_{Rn}^{\star}=0$.
By \eqref{eq:BC-KKT2}, we have $\lambda_4^{\star}=1-\lambda_5^{\star}$. Thus, the optimal dual variables $(\alpha_3^{\star},\lambda_5^{\star})$ can be derived by a two-level bisection optimization method, and the obtained algorithm for solving subproblem \eqref{eq:BC-problem} is summarized in Algorithm~\ref{alg3}.
\ifreport
More details about the initialization of $\alpha_{3,\min}$ and $\alpha_{3,\max}$ in Algorithm~\ref{alg3} are given in Appendix~\ref{sec:alpha-bounds}.
\else
More details about the initialization of $\alpha_{3,\min}$ and $\alpha_{3,\max}$ in Algorithm~\ref{alg3} are given in \cite{Fei_TWCreport}.
\fi

\begin{algorithm}
\caption{Proposed resource allocation algorithm for solving problem \eqref{eq:original-problem}} \label{alg4}
\begin{algorithmic}[1]
\State \textbf{Input} CSI $\{g_{1n},g_{2n},\tilde{g}_{1n},\tilde{g}_{2n}\}_{n=1}^{N}$, average powers $\{P_1,P_2,P_R\}$, rate ratio $\rho$.
\Repeat
\State Solve the power allocation subproblems \eqref{eq:MA-problem} and \eqref{eq:BC-problem} by Algorithm~\ref{alg2} and Algorithm~\ref{alg3}, $\quad~~~~~~$respectively, where $t$ is a given parameter.
\State Update $t$ using the one-dimensional search method for problem \eqref{eq:phase-decomp-mu}.
\Until $t$ converges.
\State \textbf{Output} the optimal resource allocation $\{\bm{p}_1^{\star},\bm{p}_2^{\star},\bm{p}_R^{\star},t^{\star}\}$ and the optimal rate $R_{12}^{\star}$.
\end{algorithmic}
\end{algorithm}

As previously mentioned, after solving the power allocation subproblems \eqref{eq:MA-problem} and \eqref{eq:BC-problem}, problem \eqref{eq:phase-decomp-mu} can be solved by utilizing the efficient one-dimensional search methods in \cite[Chapter~8]{Bazaraa06}, thereby yielding Algorithm~\ref{alg4} for solving problem \eqref{eq:original-problem}.

\subsection{Computational Complexity Analysis}
The computational complexity of Algorithm~\ref{alg2} is given by $O\left(L(2)KNC_1\right)$, where
$L(n)=O\left(2(n{+}1)^2\ln(1/\epsilon)\right)$ is the number of iterations in the ellipsoid method for an $n$-variable nonsmooth optimization problem \cite[p.~155]{Nesterov04}, $\epsilon$ is the accuracy of the obtained solution, $K=O\left(\ln(1/\epsilon)\right)$ is the complexity (abbreviated for the complexity order) of one-dimensional search methods such as the bisection method, $C_1$ is the complexity for computing the closed-form power allocation solution \eqref{eq:pa-case1}-\eqref{eq:pa-case3}.
The computational complexity of Algorithm~\ref{alg3} is given by $O\left(K^2NC_2\right)$, where $C_2$ is the complexity for computing the closed-form power allocation solution to \eqref{eq:pa-BC0}.
Therefore, the overall computational complexity of the proposed resource allocation algorithm (Algorithm~\ref{alg4}) is given by $O\left(L(2)K^2NC_1+K^3NC_2\right)$.

The complexity of the resource allocation algorithm for the per-subcarrier DF relay strategy in \cite{Jitvan_TVT09} is given by  $O(L(2)L(3)KN(I+C_2))$, where $I$ is the complexity of the iterative power allocation algorithm in Eq.~(26) and (27) in \cite{Jitvan_TVT09},
$C_2$ is the complexity of the closed-form power allocation solution in Eq.~(28) in \cite{Jitvan_TVT09}. The complexities of the algorithm in \cite{Jitvan_TVT09} and Algorithm~\ref{alg4} both grow linearly with the number of subcarriers $N$, and therefore they are quite appropriate for practically large values of $N$.
In addition, the computational complexity of the iterative power allocation algorithm $I$ is much larger than that of the closed-form power allocation solution $C_1$. The computational complexity of the ellipsoid method $L(3)$ is much larger than that of one-dimensional search methods $K$. Therefore, the computational complexity of Algorithm~\ref{alg4} is much smaller than that of the resource allocation algorithm in \cite{Jitvan_TVT09}.

\section{Asymptotic Performance Analysis} \label{sec:bound-analysis}
In this section, we analyze the asymptotic rate regions of different relay strategies for two-way OFDM channels, including both the per-subcarrier and the proposed multi-subcarrier DF relay strategies, the AF relay strategy, and the cut-set outer bound, in order to compare their achievable rate regions in both low and high SNR regimes and their respective performance merits.

The cut-set outer bound for the capacity region of the two-way OFDM relay channels \eqref{eq:MA-channel}-\eqref{eq:BC-channel} is obtained by removing the sum-rate constraints in \eqref{eq:prop1}, which is given by \cite{Kim_TIT08}
\vspace{-4pt}
{\setlength\arraycolsep{0pt}
\begin{eqnarray}\label{eq:cutset}
\mathcal{R}_\text{out}(\bm P,\mathcal{G})=\Bigg\{
&&(R_{12},R_{21})\In\Rset_+^2~\bigg|\nonumber\\[-4pt]
&&R_{12}\leq\min\!\bigg\{\sum_{n=1}^Nt{\log_2\!\Big(1{+}\frac{g_{1n}p_{1n}}{t}\Big)},
\sum_{n=1}^N(1-t){\log_2\!\Big(1{+}\frac{\tilde{g}_{2n}p_{Rn}}{1-t}\Big)}\bigg\},\nonumber\\[-4pt]
&&R_{21}\leq\min\!\bigg\{\sum_{n=1}^Nt{\log_2\!\Big(1\!+\!\frac{g_{2n}p_{2n}}{t}\Big)},
\sum_{n=1}^N(1-t){\log_2\!\Big(1{+}\frac{\tilde{g}_{1n}p_{Rn}}{1-t}\Big)}\bigg\},\nonumber\\[-4pt]
&&0< t< 1,\,\sum_{n=1}^N{p_{in}}\leq P_i,\,p_{in}\geq0,\,i=1,2,R,\,n=1,\ldots,N
\Bigg\}.
\vspace{-4pt}
\end{eqnarray}}

The achievable rate region for the AF relay strategy is given by \cite{Jitvan_TVT09}
{\setlength\arraycolsep{1pt}
\begin{eqnarray}\label{eq:AF-region}
\mathcal{R}_\text{AF}(\bm P,\mathcal{G})=
\Bigg\{(R_{12},R_{21})\In\Rset_+^2~\bigg|
&&R_{12}\leq\sum_{n=1}^N\!\frac{1}{2}\log_2\!\Big(1\!+\!\frac{2p_{1n}g_{1n}\tilde{g}_{2n}a_n}
{1+\tilde{g}_{2n}a_n}\Big),\nonumber\\[-4pt]
&&R_{21}\leq\sum_{n=1}^N\!\frac{1}{2}\log_2\!\Big(1\!+\!\frac{2p_{2n}g_{2n}\tilde{g}_{1n}a_n}
{1+\tilde{g}_{1n}a_n}\Big),\nonumber\\[-4pt]
&&\sum_{n=1}^N{p_{in}}\leq P_i,\,p_{in}\geq0,\,i=1,2,R,\,n=1,\ldots,N\Bigg\},
\vspace{-4pt}
\end{eqnarray}}
$\!\!$where $a_n=\frac{p_{Rn}}{p_{1n}g_{1n}+p_{2n}g_{2n}+1}$ is the amplification factor of the relay node in subcarrier $n$ and the time proportion $t$ is fixed to be $0.5$ due to the incompressible nature of the AF relay strategy.

Suppose that $\bar{\bm P}\triangleq (\bar{P}_1,\bar{P}_2,\bar{P}_R)$ is a column vector constituted by nominal values of  $P_1$, $P_2$ and $P_R$. Then the available transmission powers of the three nodes can be expressed as
\vspace{-4pt}
\begin{equation}\label{eq:scale}
\bm{P}=x\bar{\bm P},
\end{equation}
where $x$ is a positive scalar variable. Note that the average SNRs of all the wireless links are proportional to $x$, and so we will analyze the achievable rate regions of the two-way relay strategies under consideration for small $x$ and large $x$ instead.

\subsection{Low SNR Regime}
In the low SNR region (small $x$), the function $\log_2(1+ax)$ with $a>0$ can be expressed as
\vspace{-4pt}
\begin{eqnarray}\label{eq:taylor-low}
\log_2(1+ax)=\frac{a}{\ln2}\,x+O(x^2).
\end{eqnarray}
Using \eqref{eq:taylor-low}, we can show the following proposition:

\begin{Proposition}\label{prop5}
For sufficiently small $x\,{>}\,0$ and any rate pair $(R_{12},R_{21})\In\mathcal{R}_\text{out}(x\bar{\bm P},\mathcal{G})$, there exists some $(\hat{R}_{12}, \hat{R}_{21})\In\mathcal{R}_\text{DF}(x\bar{\bm P},\mathcal{G})$ such that $R_{12}=\hat{R}_{12}+O(x^{2b})$ and $R_{21}=\hat{R}_{21}+O(x^{2b})$ for $b\geq1$. The regions $\mathcal{R}_\text{DF}(x\bar{\bm P},\mathcal{G})$ and $\mathcal{R}_\text{out}(x\bar{\bm P},\mathcal{G})$ tend to be the same as $x\,{\rightarrow}\,0$.
\end{Proposition}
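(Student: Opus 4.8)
The plan is to compare the defining inequalities of the two regions $\mathcal{R}_\text{DF}(x\bar{\bm P},\mathcal{G})$ in \eqref{eq:prop1} and $\mathcal{R}_\text{out}(x\bar{\bm P},\mathcal{G})$ in \eqref{eq:cutset}, and to show that in the low SNR regime the sum-rate constraint that distinguishes them becomes redundant up to an $O(x^{2b})$ correction, $b\geq1$. The only difference between $\mathcal{R}_\text{DF}$ and $\mathcal{R}_\text{out}$ is the presence in $\mathcal{R}_\text{DF}$ of the extra constraint $R_{12}+R_{21}\leq\sum_{n}t\log_2(1+(g_{1n}p_{1n}+g_{2n}p_{2n})/t)$, and since $\mathcal{R}_\text{DF}\subseteq\mathcal{R}_\text{out}$ always, it suffices to approximate any point of $\mathcal{R}_\text{out}$ by a point of $\mathcal{R}_\text{DF}$ for small $x$. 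So first I would fix a rate pair $(R_{12},R_{21})\In\mathcal{R}_\text{out}(x\bar{\bm P},\mathcal{G})$ together with a feasible power/time allocation $(\bm p_1,\bm p_2,\bm p_R,t)$ achieving it, where $\bm p_i \preceq x\bar P_i$ componentwise in sum.

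Second, using the Taylor expansion \eqref{eq:taylor-low}, I would expand each term. The two multiple-access cut-set bounds give $R_{12}\leq\sum_n g_{1n}p_{1n}/\ln2 + O(x^2)$ and similarly for $R_{21}$ with $g_{2n}p_{2n}$; note $p_{in}=O(x)$ so each bound is $O(x)$ and the error term is $O(x^2)$. Summing the first-order parts yields $R_{12}+R_{21}\leq \sum_n(g_{1n}p_{1n}+g_{2n}p_{2n})/\ln2 + O(x^2)$, which is exactly the first-order expansion of the sum-rate constraint $\sum_n t\log_2(1+(g_{1n}p_{1n}+g_{2n}p_{2n})/t)$. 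Hence the power/time allocation that achieves a point in $\mathcal{R}_\text{out}$ comes within $O(x^2)$ of also satisfying the sum-rate constraint. To produce an honest feasible point $(\hat R_{12},\hat R_{21})\In\mathcal{R}_\text{DF}$, I would take $\hat R_{12}=R_{12}-c_1 x^{2b}$, $\hat R_{21}=R_{21}-c_2 x^{2b}$ for suitable constants $c_i$ and $b=1$, so that all five constraints of \eqref{eq:prop1} hold at the same allocation; the individual rate bounds \eqref{eq:mac-ieq1}--\eqref{eq:bc-ieq2} are unaffected (they are identical in both regions and already satisfied), and the decremented sum absorbs the $O(x^2)$ gap in the sum-rate constraint. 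This gives $R_{12}=\hat R_{12}+O(x^{2b})$, $R_{21}=\hat R_{21}+O(x^{2b})$.

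Third, for the final sentence — that $\mathcal{R}_\text{DF}(x\bar{\bm P},\mathcal{G})$ and $\mathcal{R}_\text{out}(x\bar{\bm P},\mathcal{G})$ tend to the same region as $x\to0$ — I would make this precise via a normalization: rescale rates by dividing by $x$, so that $\tfrac1x\mathcal{R}_\text{DF}(x\bar{\bm P},\mathcal{G})$ and $\tfrac1x\mathcal{R}_\text{out}(x\bar{\bm P},\mathcal{G})$ are both bounded regions converging (e.g.\ in Hausdorff distance) to the common limiting polytope $\{(r_{12},r_{21})\in\Rset_+^2 \mid r_{12}\leq \max_{\bm p_1}\sum_n g_{1n}p_{1n}/\ln2,\ r_{21}\leq\max_{\bm p_2}\sum_n g_{2n}p_{2n}/\ln2,\ r_{12}+r_{21}\leq \text{(same as in }\mathcal{R}_\text{out}\text{, redundant here)}\}$ subject to $\sum_n p_{in}\leq\bar P_i$; the broadcast constraints contribute the analogous linear bounds in $\bm p_R$. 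The inclusion $\tfrac1x\mathcal{R}_\text{DF}\subseteq\tfrac1x\mathcal{R}_\text{out}$ is automatic, and the reverse inclusion up to $O(x^{2b-1})\to0$ follows from the approximation in the previous paragraph, so both rescaled regions share the same limit, which is the statement.

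The main obstacle I anticipate is the uniformity of the error terms: the $O(x^2)$ in \eqref{eq:taylor-low} must be controlled \emph{uniformly} over all feasible power allocations and over $t\in(0,1)$, and in particular the behaviour as $t\to0$ or $t\to1$ (where $g_{1n}p_{1n}/t$ need not be small) requires care. I would handle this by noting that on any optimal or boundary allocation $t$ stays bounded away from $0$ and $1$ for small $x$ — intuitively, extreme time splits are wasteful — or alternatively by arguing directly on the concave functions $\phi(t)=t\log_2(1+a/t)$ that $\phi(t)\leq a/\ln2$ with equality approached as $t\to\infty$, giving a clean one-sided bound that needs no smallness of $a/t$; combined with $a=g_{in}p_{in}=O(x)$ this yields the uniform $O(x^2)$ control needed to close the argument.
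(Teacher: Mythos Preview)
Your approach is essentially the same as the paper's: linearize each rate constraint via \eqref{eq:taylor-low}, observe that the sum of the two individual multiple-access bounds reproduces the sum-rate bound to first order so that the extra constraint in $\mathcal{R}_\text{DF}$ is redundant up to $O(x^{2b})$, and conclude that the two regions coincide in the limit. Your discussion of uniformity in $t$ (and the one-sided bound $t\log_2(1+a/t)\le a/\ln 2$) is in fact more careful than the paper's own argument, which applies the Taylor expansion without addressing that point.
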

Therefore, the multi-subcarrier DF relay strategy tends to achieve the capacity region of two-way OFDM relay channels \eqref{eq:MA-channel}-\eqref{eq:BC-channel} as $x\,{\rightarrow}\,0$.
\ifreport
\begin{proof}
See Appendix~\ref{sec:proof-prop5}.
\end{proof}
\else
The proof of Proposition~\ref{prop5} is given in~\cite{Fei_TWCreport} due to space limit.
\fi
On the other hand, it can be easily shown that the achievable rate region of the AF relay strategy will deflate in a much faster speed than the other two-way strategies for small $x$, due to the noise amplification and propagation effects.

\subsection{High SNR Regime}
In the high SNR region (large $x$), the function $\log_2(1+ax)$ with $a>0$ satisfies
\vspace{-4pt}
\begin{eqnarray}\label{eq:taylor-high}
\log_2(1+ax)=\log_2(ax)+O(1/x)=\log_2(x)+O(1).
\end{eqnarray}

Let us define the multiplexing gain region of the multi-subcarrier DF relay strategy \cite{Zheng_TIT03}:
\begin{eqnarray}\label{eq:multiplexing-DF}
r_\text{DF}\triangleq
\lim_{x\to\infty} \frac{\mathcal{R}_\text{DF}(x\bar{\bm P},\mathcal{G})}{\log_2(x)}.
\end{eqnarray}
Using \eqref{eq:taylor-high}, we can prove the following proposition:
\begin{Proposition}\label{prop7}
The multiplexing gain region of the multi-subcarrier DF relay strategy is given by
\vspace{-15pt}
\begin{eqnarray}\label{eq:multiplexing}
r_\text{DF}=\Big\{(r_{12},r_{21})~\Big|~r_{12}+2r_{21}\leq N,\,2r_{12}+r_{21}\leq N,\,r_{12},r_{21}\geq0\Big\}.
\end{eqnarray}
\end{Proposition}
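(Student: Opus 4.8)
\emph{Overview of the plan.} The plan is to read off the limit region by analyzing $\mathcal{R}_\text{DF}(x\bar{\bm P},\mathcal{G})$ for each \emph{fixed} time split $t$ via the high‑SNR expansion, and then to eliminate $t$. The one elementary fact I will use repeatedly is: for any constant $\gamma>0$, $t\log_2\!\big(1+\gamma x/t\big)=t\log_2 x+O(1)$ as $x\to\infty$, where the $O(1)$ is \emph{uniform} in $t\in(0,1)$ (upper bound: $t\log_2(1+\gamma x/t)\le t\log_2 x+t\log_2(1/t)+O(t)$ together with $t\log_2(1/t)\le 1/(e\ln 2)$; lower bound: $t\log_2(1+\gamma x/t)\ge t\log_2 x-|\log_2\gamma|$); the same holds with $t$ replaced by $1-t$. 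Since $N$ is fixed, summing $N$ such terms keeps the error $O(1)$. Throughout I take the channel gains in $\mathcal{G}$ to be strictly positive, so that $g_{\min}\triangleq\min_{i,n}\{g_{in},\tilde g_{in}\}>0$.

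\emph{Converse direction.} Take any $(R_{12},R_{21})\in\mathcal{R}_\text{DF}(x\bar{\bm P},\mathcal{G})$, with its associated $t$ and power allocation. Since $p_{in}\le x\bar P_i$ for every $n$, each argument inside a logarithm in \eqref{eq:prop1} is at most $1+\gamma x/t$ (for the MA terms) or $1+\gamma x/(1-t)$ (for the BC terms), with $\gamma$ depending only on $\mathcal{G}$ and $\bar{\bm P}$. Hence the MA sum‑rate constraint gives $R_{12}+R_{21}\le tN\log_2 x+O(1)$, while the two broadcast constraints give $R_{12}\le(1-t)N\log_2 x+O(1)$ and $R_{21}\le(1-t)N\log_2 x+O(1)$. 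Adding the sum‑rate bound to each broadcast bound and using $t+(1-t)=1$ yields $2R_{12}+R_{21}\le N\log_2 x+O(1)$ and $R_{12}+2R_{21}\le N\log_2 x+O(1)$. Dividing by $\log_2 x$ and letting $x\to\infty$ shows every point of $r_\text{DF}$ obeys $2r_{12}+r_{21}\le N$ and $r_{12}+2r_{21}\le N$.

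\emph{Achievability direction.} Fix a target $(r_{12},r_{21})$ with $2r_{12}+r_{21}\le N$ and $r_{12}+2r_{21}\le N$ (the origin is trivial, so assume the point is nonzero); these imply $\max\{r_{12},r_{21}\}\le N/2<N$. Choose the time split $t^{\star}\triangleq 1-\max\{r_{12},r_{21}\}/N\in(0,1)$, so that $(1-t^{\star})N=\max\{r_{12},r_{21}\}\ge r_{12},r_{21}$ and $t^{\star}N=N-\max\{r_{12},r_{21}\}\ge r_{12}+r_{21}$ (the last inequality is exactly one of the two hypotheses, depending on which of $r_{12},r_{21}$ is larger). With $t=t^{\star}$ and the uniform allocation $p_{in}=x\bar P_i/N$, the elementary estimate turns every right‑hand side in \eqref{eq:prop1} into (coefficient)$\cdot N\log_2 x+O(1)$, where the coefficients are $t^{\star}$ for the MA bounds and $1-t^{\star}$ for the BC bounds. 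Therefore, for any $\epsilon>0$ and all sufficiently large $x$, the pair $\big((r_{12}-\epsilon)\log_2 x,(r_{21}-\epsilon)\log_2 x\big)$ satisfies all three defining inequalities and hence lies in $\mathcal{R}_\text{DF}(x\bar{\bm P},\mathcal{G})$. Dividing by $\log_2 x$, then letting $x\to\infty$ and $\epsilon\to 0$, places $(r_{12},r_{21})$ in $r_\text{DF}$. Combining the two directions gives \eqref{eq:multiplexing}; note that the converse bounds the $\limsup$ of the normalized regions and the achievability bounds the $\liminf$ from below, so the limit in \eqref{eq:multiplexing-DF} genuinely exists and equals the stated polytope.

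\emph{Main obstacle.} No step is deep; the part that needs the most thought is the choice of $t^{\star}$ as a function of the target rate pair in the achievability direction — this is in effect a Fourier--Motzkin elimination of $t$ from the system $\{\,r_{12}+r_{21}\le tN,\ r_{12}\le(1-t)N,\ r_{21}\le(1-t)N\,\}$, and it is precisely what produces the asymmetric coefficients $2$ and $1$ in \eqref{eq:multiplexing}. The secondary care point is bookkeeping: one must verify that all the $O(1)$ error terms are uniform in $t\in(0,1)$ (which is why the $t\log_2(1/t)$ bound is needed), so that taking $x\to\infty$ commutes with the supremum over $t$ and over power allocations.
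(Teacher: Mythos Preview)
Your proof is correct and follows essentially the same sandwiching argument as the paper: the converse uses the trivial bound $p_{in}\le x\bar P_i$ (the paper's outer region $\mathcal{R}_2$), and the achievability uses uniform power allocation $p_{in}=x\bar P_i/N$ (the paper's inner region $\mathcal{R}_1$). The paper relegates the elimination of $t$ to a one-line remark (``after some simple manipulations, \eqref{eq:multiplexing1} can be simplified as \eqref{eq:multiplexing}''), whereas you carry out that Fourier--Motzkin step explicitly---combining the sum-rate and broadcast bounds in the converse, and exhibiting the witness $t^\star=1-\max\{r_{12},r_{21}\}/N$ in the achievability---and you are also more careful than the paper about the uniformity in $t$ of the $O(1)$ error terms.
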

\begin{proof}
To prove \eqref{eq:multiplexing}, it is sufficient to find two rate regions ${\mathcal{R}}_1(x\bar{\bm P},\mathcal{G})$ and
${\mathcal{R}}_2(x\bar{\bm P},\mathcal{G})$, such that
${\mathcal{R}}_1(x\bar{\bm P},\mathcal{G})\subset{\mathcal{R}}_\text{DF}(x\bar{\bm P},\mathcal{G})\subset{\mathcal{R}}_2(x\bar{\bm P},\mathcal{G})$, and the corresponding multiplexing gain regions of ${\mathcal{R}}_1(x\bar{\bm P},\mathcal{G})$ and
${\mathcal{R}}_2(x\bar{\bm P},\mathcal{G})$ are both given by \eqref{eq:multiplexing}.
\ifreport
The detailed proof is given in Appendix~\ref{sec:proof-prop7}.
\else
The detailed proof is given in \cite{Fei_TWCreport} due to space limit.
\fi
\end{proof}

Actually, the multiplexing gain region $r_\text{DF}$ given by \eqref{eq:multiplexing} depends on the time proportion allocation but not upon the power allocation,
\ifreport
which can be observed from the proof of Proposition~\ref{prop7} in Appendix~\ref{sec:proof-prop7}.
\else
which can be observed from the proof of Proposition~\ref{prop7} \cite{Fei_TWCreport}.
\fi
For instance, the simple equal power allocation scheme can achieve this multiplexing gain region, and thereby the achievable rate region gap between this power allocation scheme and the optimal power allocation scheme asymptotically converges to a constant region gap for sufficiently large $x$.

Following similar ideas for the proof of Proposition~\ref{prop7}, one can derive the multiplexing gains for the per-subcarrier DF relay strategy, the AF relay strategy, and the cut-set outer bound as stated in the following proposition (with the proof omitted):
\begin{Proposition}\label{prop8}
Let $r_\text{p,DF}$, $r_\text{AF}$, and $r_\text{out}$ denote the multiplexing gain regions of the per-subcarrier DF relay strategy, the AF relay strategy and the cut-set outer bound, respectively. Then $r_\text{p,DF}=r_\text{DF}$ (given by \eqref{eq:multiplexing}) and $r_\text{AF}=r_\text{out}=\Big\{(r_{12},r_{21})~\Big|~r_{12}\leq \frac{N}{2},\,r_{21}\leq \frac{N}{2},\,r_{12},r_{21}\geq0\Big\}$.
\end{Proposition}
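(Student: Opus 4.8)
The plan is to establish the three equalities separately, in each case mimicking the sandwich argument used in the proof of Proposition~\ref{prop7}: I exhibit a simple (equal-power, suitably time-split) allocation that asymptotically achieves a region with the claimed multiplexing gain, and I upper-bound each logarithmic term crudely via \eqref{eq:taylor-high} to obtain an outer region with the same multiplexing gain. Throughout, powers are scaled as $\bm P=x\bar{\bm P}$ as in \eqref{eq:scale} and all limits are taken as $x\to\infty$.

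\emph{The identity $r_\text{p,DF}=r_\text{DF}$.} Since the only difference between \eqref{eq:prop1} and \eqref{eq:region-psc} is the order of $\min\{\cdot\}$ and $\sum_n$, we have $\mathcal{R}_\text{p,DF}(\bm P,\mathcal{G})\subseteq\mathcal{R}_\text{DF}(\bm P,\mathcal{G})$ for every $\bm P$, hence $r_\text{p,DF}\subseteq r_\text{DF}$, and $r_\text{DF}$ is given by \eqref{eq:multiplexing}. For the reverse inclusion I fix the equal-power allocation $p_{in}=x\bar P_i/N$ for all $n$ and a time split $t\in(0,1)$ to be chosen. By \eqref{eq:taylor-high}, on each subcarrier $t\log_2(1+g_{1n}p_{1n}/t)=t\log_2 x+O(1)$ and $(1-t)\log_2(1+\tilde g_{2n}p_{Rn}/(1-t))=(1-t)\log_2 x+O(1)$, so the per-subcarrier minimum in \eqref{eq:region-psc} equals $\min\{t,1-t\}\log_2 x+O(1)$, while the per-subcarrier sum-rate term equals $t\log_2 x+O(1)$. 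Dividing \eqref{eq:region-psc} by $\log_2 x$ and letting $x\to\infty$ thus shows that $r_\text{p,DF}$ contains $\bigcup_{t\in(0,1)}\{(r_{12},r_{21}):r_{12}\le N\min\{t,1-t\},\,r_{21}\le N\min\{t,1-t\},\,r_{12}+r_{21}\le Nt\}$, where achievability of each interior point follows by assigning the rates $r_{12}/N$ and $r_{21}/N$ to the two directions on every subcarrier (a routine check that the point $(r_{12}/N,r_{21}/N)$ lies in the single-subcarrier region). A $t\in(0,1)$ with $\max\{r_{12},r_{21}\}/N\le\min\{t,1-t\}$ and $(r_{12}+r_{21})/N\le t$ can be chosen iff $r_{12}+r_{21}+\max\{r_{12},r_{21}\}\le N$, i.e.\ iff both $2r_{12}+r_{21}\le N$ and $r_{12}+2r_{21}\le N$; taking closures this is exactly \eqref{eq:multiplexing}. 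Hence $r_\text{p,DF}=r_\text{DF}$.

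\emph{The identities $r_\text{out}=\{(r_{12},r_{21}):r_{12}\le N/2,\,r_{21}\le N/2,\,r_{12},r_{21}\ge0\}$ and $r_\text{AF}=r_\text{out}$.} For $r_\text{out}$, using $p_{in}\le x\bar P_i$, \eqref{eq:taylor-high}, and the boundedness of $t\log_2(1/t)$ and $(1-t)\log_2(1/(1-t))$ on $(0,1)$, each sum in \eqref{eq:cutset} is at most $Nt\log_2 x+O(1)$ or $N(1-t)\log_2 x+O(1)$, so $r_{12}\le\min\{Nt,N(1-t)\}\le N/2$ and likewise $r_{21}\le N/2$; conversely, with $t=1/2$ and equal power every term in \eqref{eq:cutset} equals $\tfrac12\log_2 x+O(1)$, and since \eqref{eq:cutset} has no sum-rate constraint the full box is attained. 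For $r_\text{AF}$, recall $t$ is fixed to $1/2$ in \eqref{eq:AF-region}. With $p_{1n},p_{2n},p_{Rn}$ all of order $x$, the amplification factor $a_n=p_{Rn}/(p_{1n}g_{1n}+p_{2n}g_{2n}+1)$ tends to a positive constant, so $2p_{1n}g_{1n}\tilde g_{2n}a_n/(1+\tilde g_{2n}a_n)=\Theta(x)$ and, by \eqref{eq:taylor-high}, the $n$-th $R_{12}$ term equals $\tfrac12\log_2 x+O(1)$; summing, and using that the $R_{12}$ and $R_{21}$ constraints in \eqref{eq:AF-region} are decoupled, gives $(N/2,N/2)\in r_\text{AF}$, hence the whole box. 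The converse follows from $2p_{1n}g_{1n}\tilde g_{2n}a_n/(1+\tilde g_{2n}a_n)\le 2p_{1n}g_{1n}\le 2x\bar P_1 g_{1n}$, which forces each $R_{12}$ term to be at most $\tfrac12\log_2 x+O(1)$ and so $r_{12}\le N/2$ (and symmetrically $r_{21}\le N/2$). Therefore $r_\text{AF}=r_\text{out}$.

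\emph{Main obstacle.} The only genuinely nontrivial step is the reverse inclusion $r_\text{DF}\subseteq r_\text{p,DF}$: one must argue that the per-subcarrier strategy, although strictly suboptimal at finite SNR (Section~\ref{sec:region-msc}), loses nothing in the multiplexing-gain limit. Concretely, one must verify that under equal power the per-subcarrier constraint $R_{12}\le\sum_n\min\{\cdot,\cdot\}$ differs from its multi-subcarrier counterpart only by $O(1)$, and that every interior point of the single-subcarrier region $\{(\mu_{12},\mu_{21}):\mu_{12}+\mu_{21}\le t,\ \mu_{12},\mu_{21}\le\min\{t,1-t\}\}$ is truly achievable rather than merely an outer bound; this requires careful but elementary bookkeeping of which power scales ($\Theta(x)$ versus $O(1)$) are placed on which subcarrier and in which phase. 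The AF amplification-factor estimate and the boundedness of $t\log_2(1/t)$ are comparatively routine.
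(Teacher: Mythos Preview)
Your proposal is correct and follows exactly the approach the paper intends: the paper omits the proof of Proposition~\ref{prop8} and simply refers to ``similar ideas for the proof of Proposition~\ref{prop7}'', i.e., the same sandwich argument (equal-power inner region, crude $p_{in}\le x\bar P_i$ outer region, plus \eqref{eq:taylor-high}) that you carry out for each of $r_\text{p,DF}$, $r_\text{out}$, and $r_\text{AF}$. One small remark: your aside about ``assigning the rates $r_{12}/N$ and $r_{21}/N$ to each subcarrier'' is unnecessary, since the per-subcarrier region \eqref{eq:region-psc} is already stated in terms of the total rates $(R_{12},R_{21})$ and the three aggregate inequalities you derive under equal power directly characterise the achievable multiplexing pairs---no per-subcarrier rate-splitting needs to be verified.
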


Proposition~\ref{prop8} implies that the AF relay strategy can achieve the multiplexing gain region of the two-way OFDM relay channels, while the performance of both DF relay strategies is worse than that of the AF relay strategy in the high SNR regime. An illustrative example for these analytical results is given in Fig.~\ref{fig:region-compare-multiplxing-gain}.
To the best of our knowledge, the multiplexing gain region of the cut-set outer bound was derived in \cite{Gunduz_Asilomar08}, while the multiplexing gain regions of the DF and AF relay strategies have not been reported in the open literature before. All the analytical results as presented in Propositions~\ref{prop5}-\ref{prop8} will be confirmed by our numerical results in the next section.

\begin{figure}[t]
    \centering
    \scalebox{0.6}{\includegraphics*[83,123][405,441]{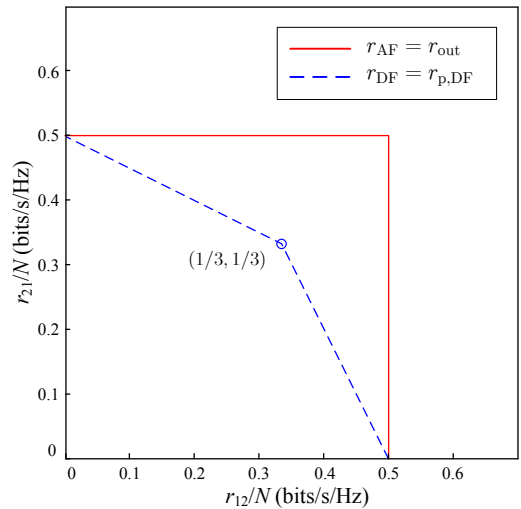}}
    \caption{Comparison of multiplexing gain regions of different two-way OFDM relay strategies in the high SNR regime.} \label{fig:region-compare-multiplxing-gain}
\end{figure}

\section{Numerical Results} \label{sec:simulation}
We now provide some numerical results to compare the performance of different
two-way OFDM relay strategies under optimal resource allocation. The wireless channels are generated by using $M=4$ independent Rayleigh distributed time-domain taps. The number of subcarriers in the OFDM channel is $N=16$. We assume that the wireless channels between $T_i\,(i\In\{1,2\})$ and $T_R$ are reciprocal, i.e., $g_{in}=\tilde{g}_{in}$, for all $i=1,2,\;n=1,\ldots,16$. The maximum average transmission powers for all the nodes are assumed to be the same, i.e., $P_1=P_2=P_R=P$. Therefore, the average SNR of the wireless links between $T_i\,(i\In\{1,2\})$ and $T_R$ is given by $\text{SNR}_i=\mathbb{E}[g_{in}]\frac{P}{N}$.

We consider the following two-way relay strategies in our numerical comparisons: the multi-subcarrier DF relay strategy proposed in Lemma~\ref{prop1}, the per-subcarrier DF relay strategy \cite{Jitvan_TVT09}, the AF relay strategy \cite{Jitvan_TVT09}, and the cut-set outer bound \cite{Kim_TIT08}. The associated rate regions of these strategies are given by ${\mathcal{R}}_\text{DF}(\bm P,\mathcal{G})$ in \eqref{eq:prop1}, ${\mathcal{R}}_\text{p,DF}(\bm P,\mathcal{G})$ in \eqref{eq:region-psc}, ${\mathcal{R}}_\text{AF}(\bm P,\mathcal{G})$ in \eqref{eq:AF-region}, and ${\mathcal{R}}_\text{out}(\bm P,\mathcal{G})$ in \eqref{eq:cutset}, respectively. The optimal resource allocation of the multi-subcarrier DF relay strategy is obtained by Algorithm~\ref{alg4}, the optimal resource allocation of the per-subcarrier DF relay strategy and the AF relay strategy are carried out based on the power allocation algorithms proposed in \cite{Jitvan_TVT09}, and the optimal resource allocation of the cut-set outer bound is obtained by a simpler version of Algorithm~\ref{alg4}.

\begin{figure*}[t]
    \centering
    \subfigure[$\text{SNR}\,{=}\,0\,$dB.]{
    \scalebox{0.6}{\includegraphics*[78,476][418,816]{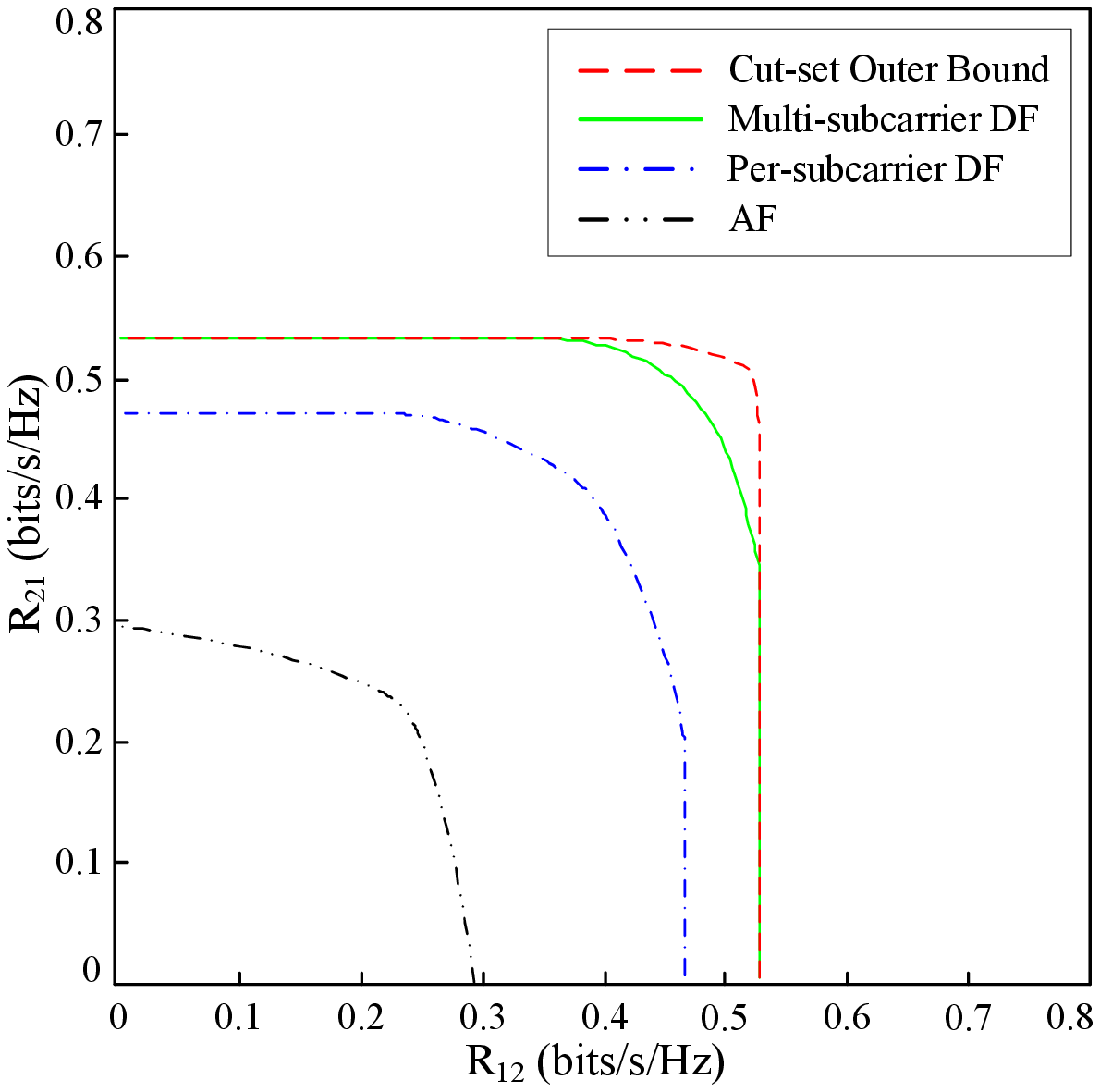}} \label{fig:region-compare-0dB}}
    \subfigure[$\text{SNR}\,{=}\,10\,$dB.]{
    \scalebox{0.6}{\includegraphics*[78,110][418,448]{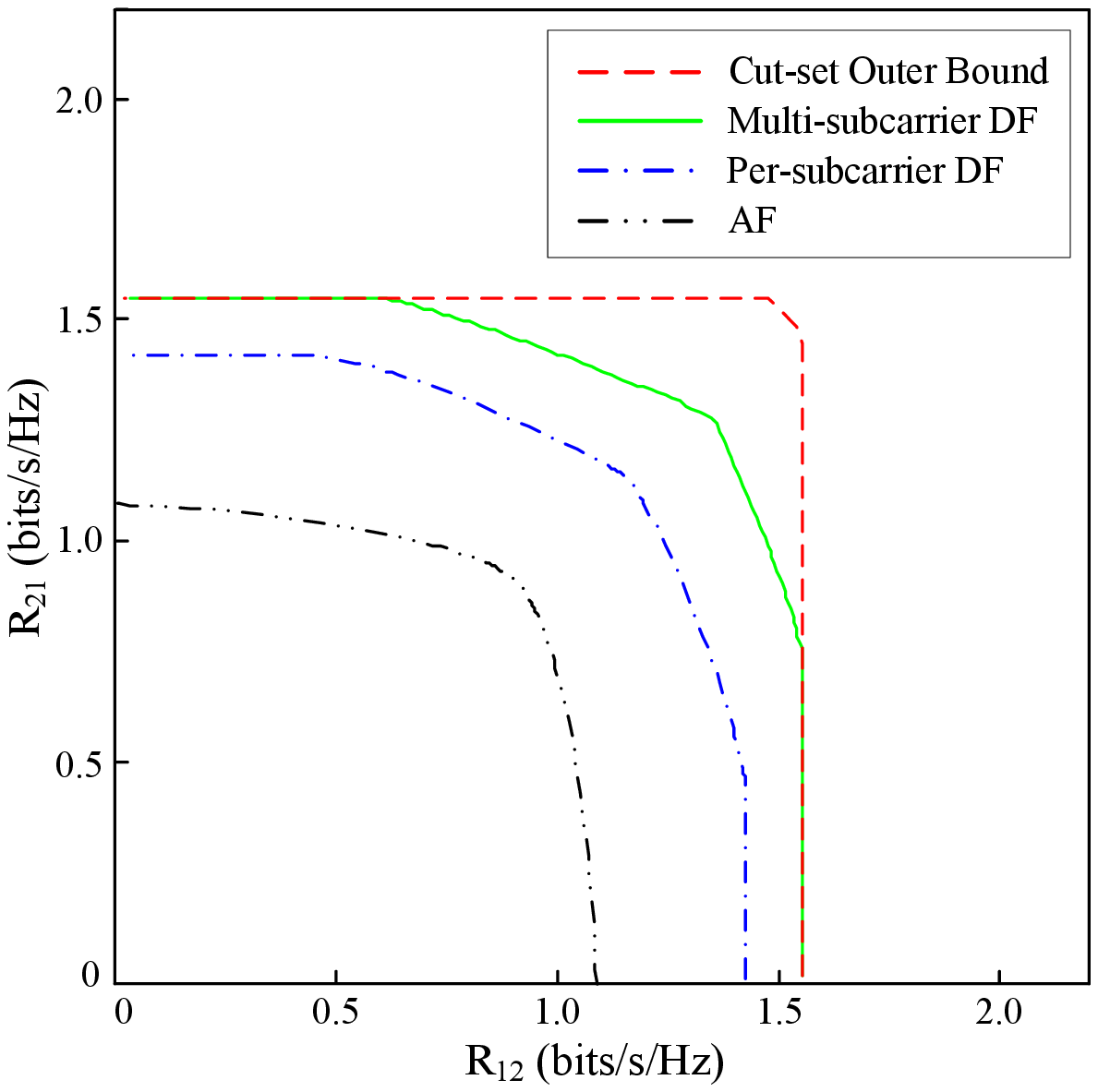}} \label{fig:region-compare-10dB}}
    \vspace{20pt}
    \subfigure[$\text{SNR}\,{=}\,20\,$dB.]{
    \scalebox{0.6}{\includegraphics*[78,476][418,816]{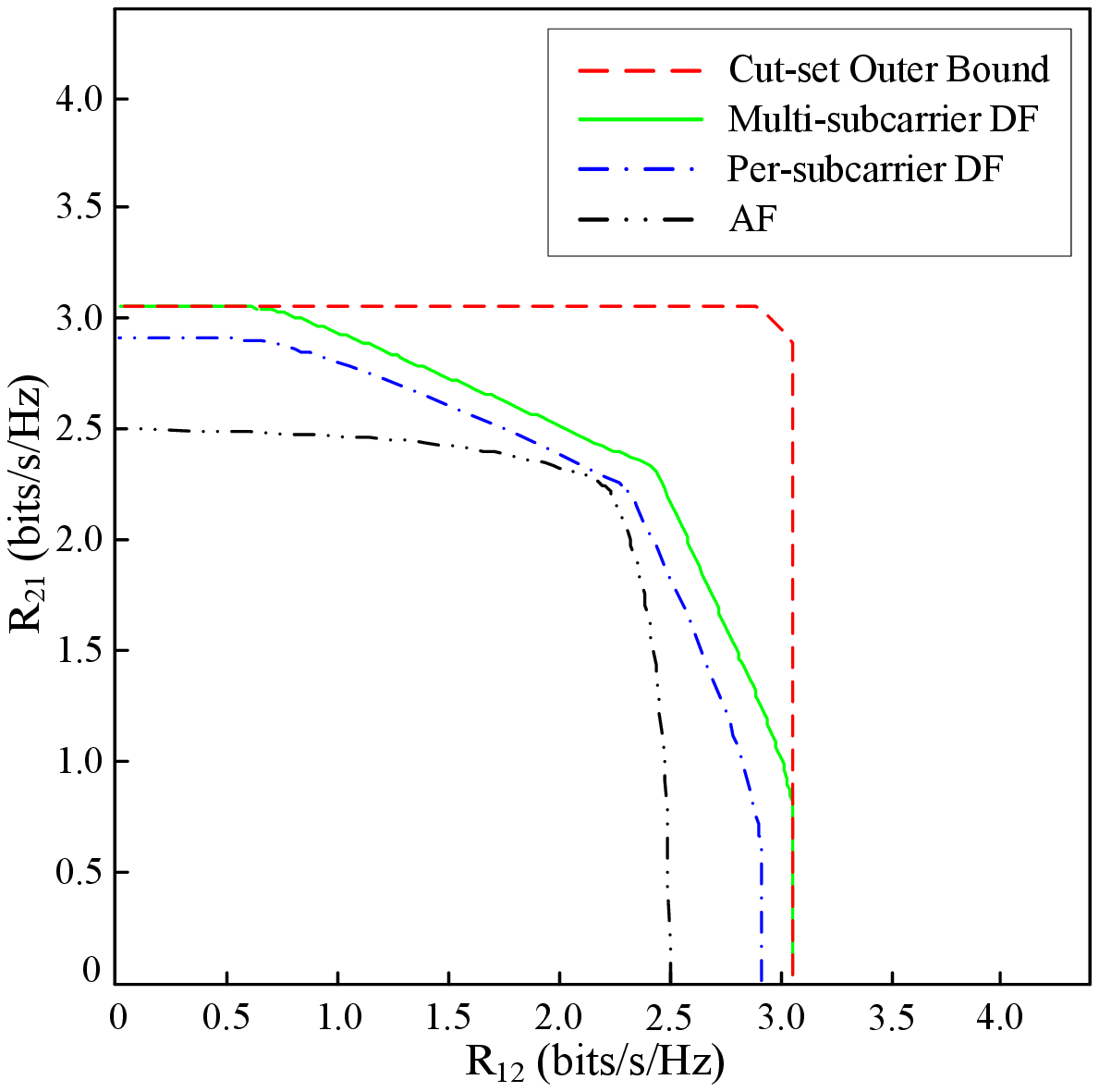}} \label{fig:region-compare-20dB}}
    \subfigure[$\text{SNR}\,{=}\,30\,$dB.]{
    \scalebox{0.6}{\includegraphics*[78,110][418,448]{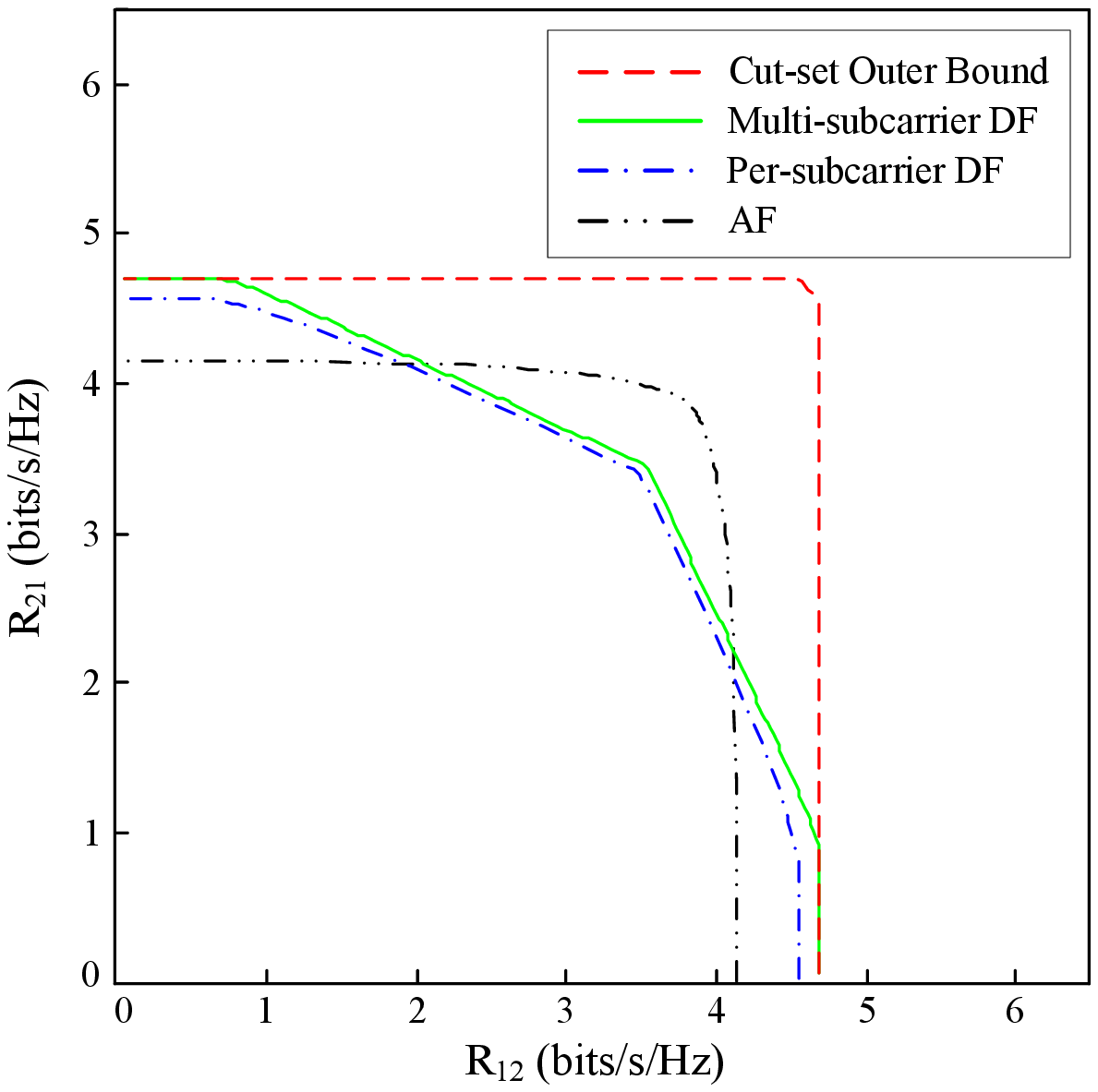}} \label{fig:region-compare-30dB}}
    \caption{Achievable rate regions of four two-way OFDM relay strategies for four symmetric SNR scenarios (i.e., $\text{SNR}_1\,{=}\,\text{SNR}_2\,{=}\,\text{SNR}$), including (a) $\text{SNR}\,{=}\,0\,$dB, (b) $\text{SNR}\,{=}\,10\,$dB, (c) $\text{SNR}\,{=}\,20\,$dB, and (d) $\text{SNR}\,{=}\,30\,$dB.} \label{fig:region-compare}
    \vspace{-20pt}
\end{figure*}

Figures~\ref{fig:region-compare-0dB}-\ref{fig:region-compare-30dB} provide the rate regions of these relay strategies for four symmetric SNR scenarios with $\text{SNR}_1\,{=}\,\text{SNR}_2\,{=}\,\text{SNR}\,{=}\,0,10,20,30\,$dB, respectively.
Some observations from these figures are worth mentioning: First, the achievable rate region of the multi-subcarrier DF relay strategy is always larger than that of the per-subcarrier DF relay strategy. Second, as the SNR decreases, the achievable rate region of the multi-subcarrier DF relay strategy tends to reach the cut-set outer bound. Third, the achievable rate region of the AF relay strategy grows with SNR, but it is still a subset of those of the DF relay strategies for $\text{SNR}\,{\leq}\,20\,$dB; this is no longer true for $\text{SNR}\,{=}\,30\,$dB. Finally, in the high SNR region, the rate regions of these strategies tend to be dominated by the multiplexing gain region, thereby consistent with Propositions~\ref{prop7} and \ref{prop8}. To be more specific, the shape of the outer bound tends to be a rectangle depending on the SNR. The achievable rate region of the AF strategy is closer to the outer bound for the higher SNR, but that of the two DF strategies are not. However, for the low SNR, only the proposed multi-subcarrier DF strategy can approach the outer bound.

\ifreport
Figure~\ref{fig:t-rho-20dB} provides the optimal channel resource allocation result $t^\star$ versus the rate ratio $\rho\,{=}\,R_{21}/R_{12}$ of the multi-subcarrier DF strategy and the cut-set outer bound for $\text{SNR}_1=\text{SNR}_2=20\,$dB. When $0<\rho<0.2$ or $\rho>5$, the optimal $t^\star$ of the multi-subcarrier DF strategy and the cut-set outer bound are the same; when $0.2<\rho<5$, the optimal $t^\star$ of the multi-subcarrier DF strategy is larger than that of the cut-set outer bound due to the additional sum-rate constraint, and the maximal $t^\star$ is achieved at $\rho=1$.
\else
\fi
Figure~\ref{fig:region-compare-CVX-20dB} shows the achievable rate region of the multi-subcarrier DF relay strategy obtained by solving problem \eqref{eq:original-problem} using CVX, and that obtained by using Algorithm~\ref{alg4}, justifying that they yield the same numerical results as expected.

\ifreport
\begin{figure}[t]
    \centering
    \scalebox{0.6}{\includegraphics*[53,117][465,443]{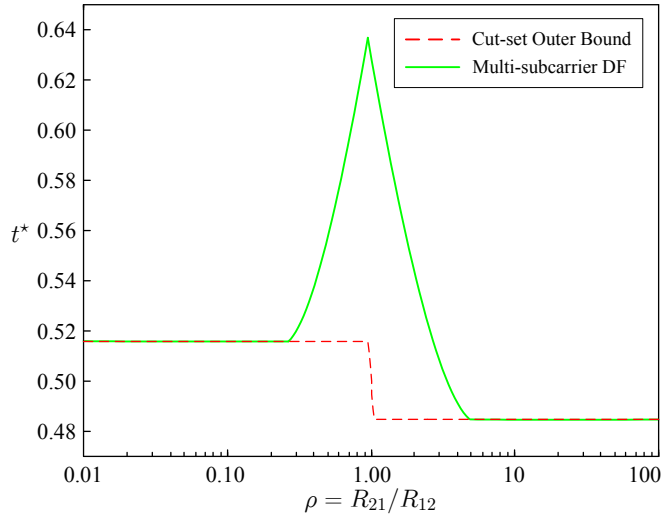}}
    \caption{The optimal time proportion of the multiple-access phase $t^\star$ versus the rate ratio $\rho\,{=}\,R_{21}/R_{12}$ of the multi-subcarrier DF strategy and the cut-set outer bound for $\text{SNR}_1\,{=}\,\text{SNR}_2\,{=}\,20\,$dB.} \label{fig:t-rho-20dB}
    \vspace{-20pt}
\end{figure}
\else
\fi

Figure~\ref{fig:region-compare-10dB-5dB} and \ref{fig:region-compare-30dB-5dB} illustrate the rate regions of these relay strategies for two asymmetric SNR scenarios, including $(\text{SNR}_1,\text{SNR}_2)\,{=}\,(10\,\text{dB},5\,\text{dB})$ and $(\text{SNR}_1,\text{SNR}_2)\,{=}\,(30\,\text{dB},5\,\text{dB})$.
Similar observations from Figure~\ref{fig:region-compare} can be seen in Figure~\ref{fig:region-compare-asym} as well.

Finally, Figure~\ref{fig:rate-msc-psc} shows some results (the achievable rate versus average SNR) of these relay strategies for the symmetric SNR symmetric rate scenario, i.e., $\text{SNR}_1\,{=}\,\text{SNR}_2\,{=}\,\text{SNR}$ and $R_{12}=R_{21}$. The numerical results in Fig.~\ref{fig:rate-msc-psc} were obtained by averaging over 500 fading channel realizations. One can see from this figure that, in the low SNR regime, the multi-subcarrier DF relay strategy tends to have the same performance as the cut-set outer bound, and that the multi-subcarrier DF relay strategy performs better than the AF relay strategy in the low to moderate SNR regime, i.e., $\text{SNR}\,{\leq}\,24\,$dB. Moreover, the multi-subcarrier DF relay strategy with the optimal resource allocation performs better than with the equal power allocation and the optimal $t^\star$ used; it also outperforms the per-subcarrier DF strategy.

\begin{figure}[t]
    \centering
    \scalebox{0.6}{\includegraphics*[78,476][398,796]{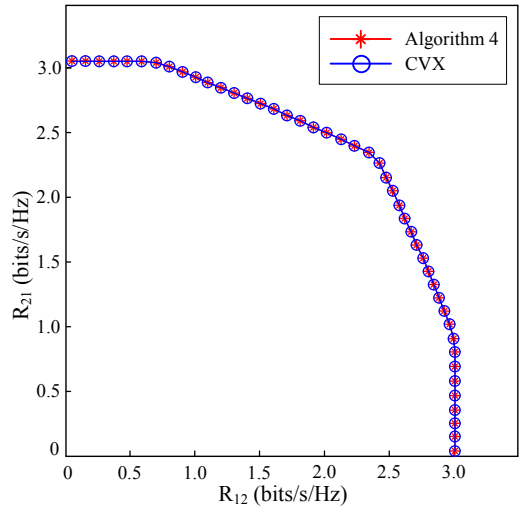}}
    \caption{Achievable rate region of the multi-subcarrier DF relay strategy obtained by using Algorithm~\ref{alg4} and that obtained by solving problem \eqref{eq:original-problem} using the convex solver CVX for $\text{SNR}_1\,{=}\,\text{SNR}_2\,{=}\,20\,$dB.} \label{fig:region-compare-CVX-20dB}
    \vspace{-20pt}
\end{figure}

By Proposition~\ref{prop8}, in the high SNR regime, the multiplexing gains of the AF relay strategy and the cut-set outer bound are the same; the multiplexing gains of the two DF relay strategies are also the same; the multiplexing gain of the AF relay strategy is larger than that of the DF relay strategy (implying better performance for the former than the latter for sufficiently high SNR); both the equal power allocation and the optimal power allocation for the multi-subcarrier DF strategy achieve the same multiplexing gain, and the rate gap between them tends to a constant value as SNR increases. All these analytical results have been substantiated by the numerical results shown in Figures~\ref{fig:region-compare}-\ref{fig:rate-msc-psc}.

\ifreport
\else
\begin{figure*}[t]
    \centering
    \subfigure[$\text{SNR}_1\,{=}\,10\,\text{dB},\,\text{SNR}_2\,{=}\,5\,\text{dB}$.]{
    \scalebox{0.6}{\includegraphics*[78,476][418,816]{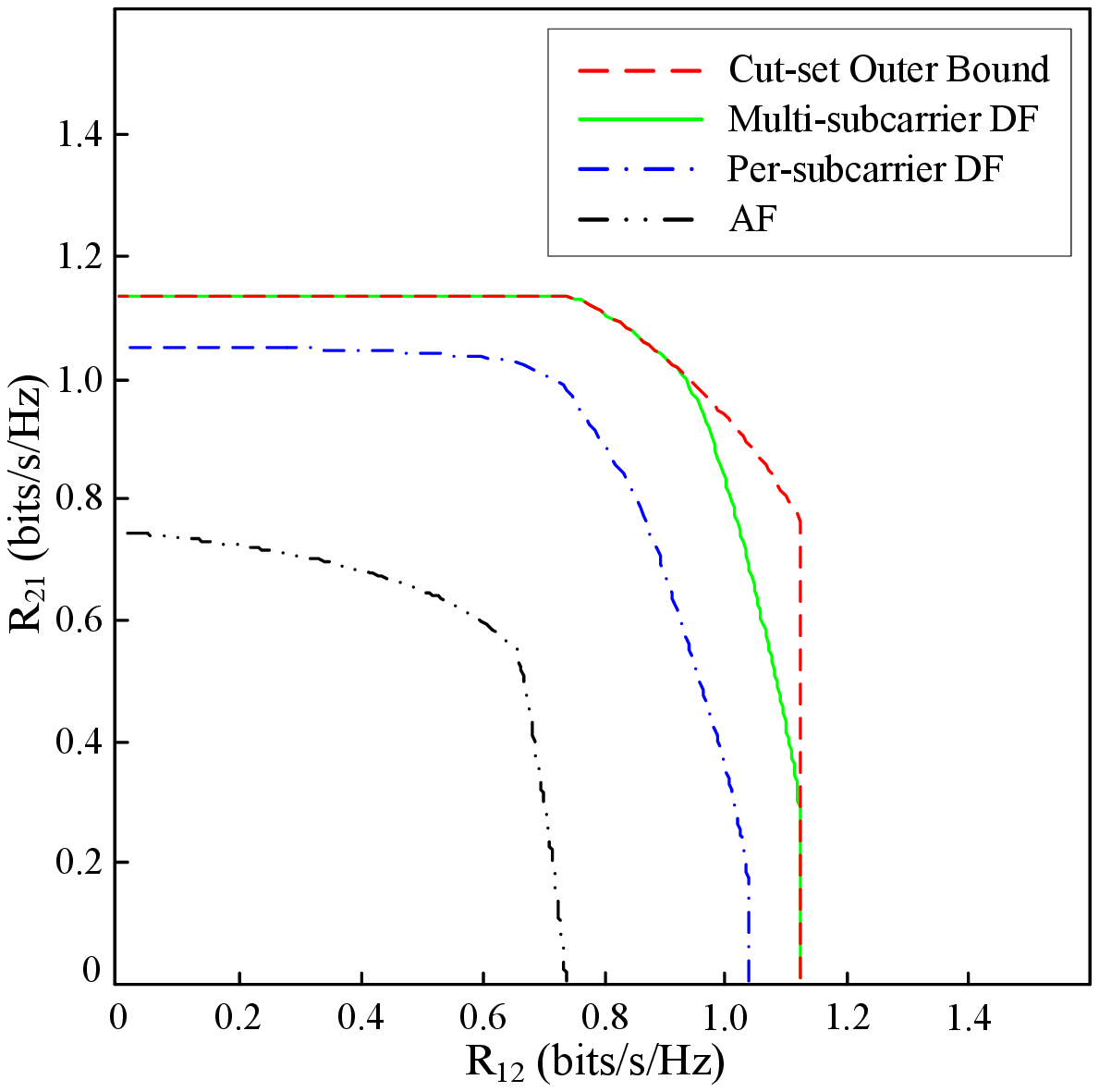}} \label{fig:region-compare-10dB-5dB}}
    \subfigure[$\text{SNR}_1\,{=}\,30\,\text{dB},\,\text{SNR}_2\,{=}\,5\,\text{dB}$.]{
    \scalebox{0.6}{\includegraphics*[78,476][418,816]{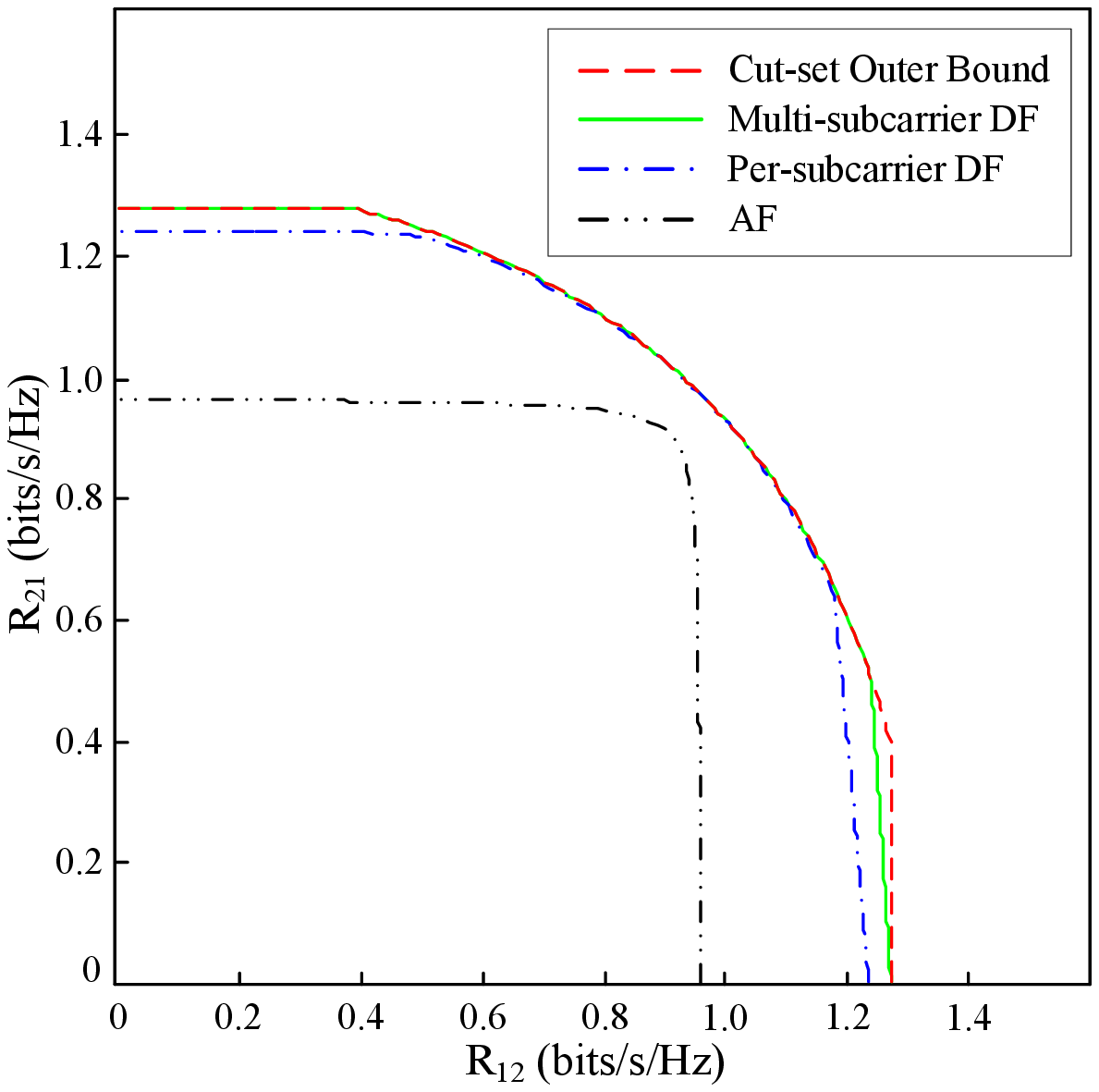}} \label{fig:region-compare-30dB-5dB}}
    \caption{Achievable rate regions of four two-way OFDM relay strategies for two asymmetric SNR scenarios, including (a) $\text{SNR}_1\,{=}\,10\,\text{dB},\,\text{SNR}_2\,{=}\,5\,\text{dB}$, and (b) $\text{SNR}_1\,{=}\,30\,\text{dB},\,\text{SNR}_2\,{=}\,5\,\text{dB}$.} \label{fig:region-compare-asym}
    \vspace{-20pt}
\end{figure*}

\begin{figure}[t]
    \centering
    \scalebox{0.6}{\includegraphics*[42,424][528,820]{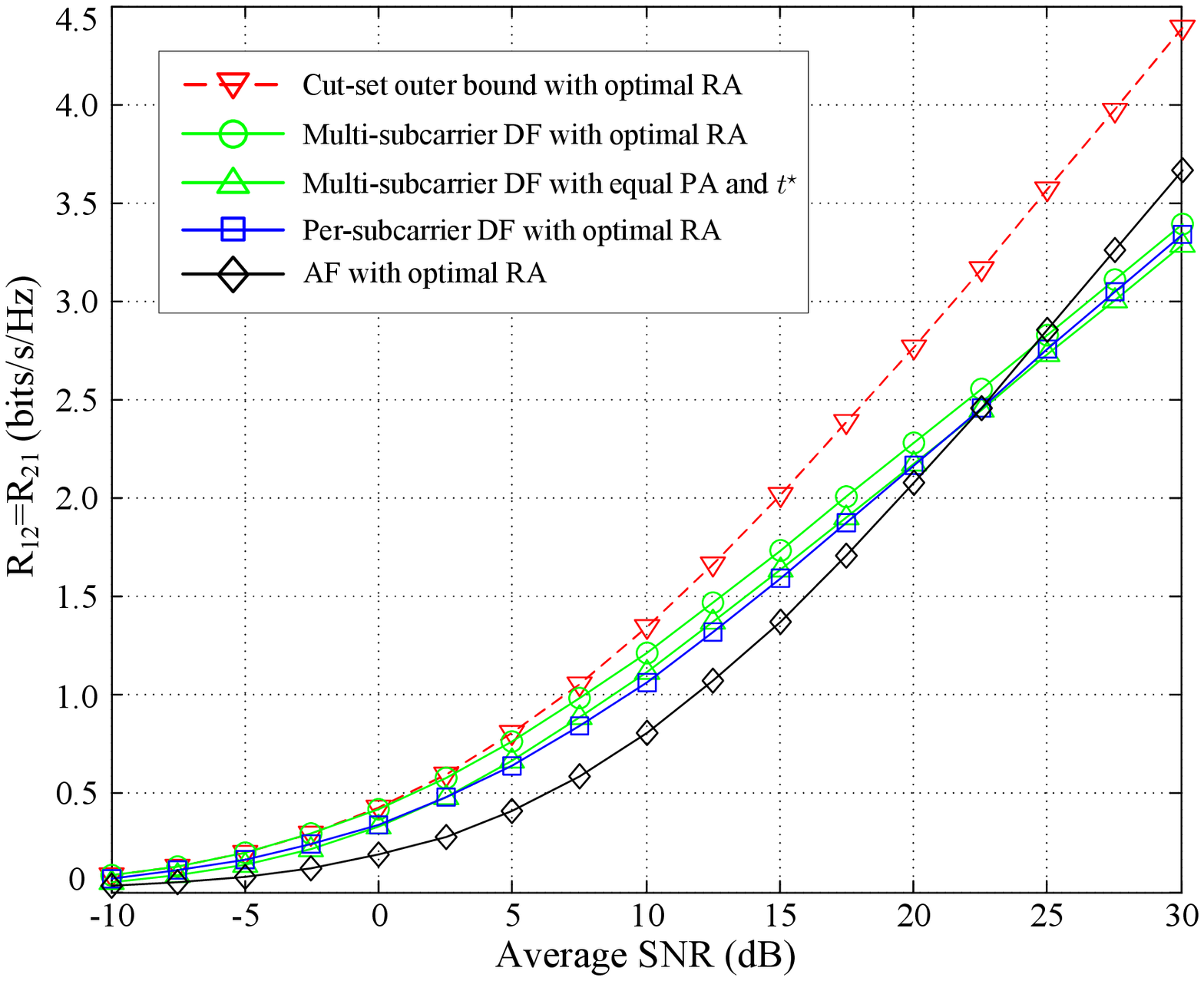}}
    \caption{Achievable rate performance comparison of two-way OFDM relay strategies, where $\text{SNR}_1\,{=}\,\text{SNR}_2\,{=}\,\text{SNR}$, $R_{12}=R_{21}$, ``RA'' stands for resource allocation, and ``equal PA'' stands for equal power allocation in the legend.} \label{fig:rate-msc-psc}
    \vspace{-20pt}
\end{figure}
\fi

\section{Conclusion} \label{sec:conclusion}
We have analytically shown that the widely studied per-subcarrier DF relay strategy is only a suboptimal DF relay strategy for two-way OFDM relay channels in terms of achievable rate region as the direct link between the two terminal nodes is too weak to be used for data transmission. We have presented a multi-subcarrier DF relay strategy that can achieve a larger rate region than the per-subcarrier DF relay strategy. Although the optimal resource allocation for the proposed multi-subcarrier DF relay strategy can be formulated as a convex optimization problem which can be solved by off-the-self convex solvers, we have presented a computationally efficient algorithm (Algorithm~\ref{alg4}) for obtaining the optimal resource allocation together with its complexity analysis. Then we have presented an analysis of asymptotic performance for the above two DF strategies, the AF strategy, and the cut-set outer bound, justifying that the proposed multi-subcarrier DF relay strategy is suitable for the low to moderate SNR regime, while the AF strategy is suitable for the high SNR regime (as stated in Propositions~\ref{prop5} to \ref{prop8}). Some numerical results have been presented to demonstrate all the analytical results and the efficacy of the proposed optimal resource allocation algorithm.

\ifreport
\begin{figure*}[t]
    \centering
    \subfigure[$\text{SNR}_1\,{=}\,10\,\text{dB},\,\text{SNR}_2\,{=}\,5\,\text{dB}$.]{
    \scalebox{0.6}{\includegraphics*[78,476][418,816]{figs/10dBx5dB}} \label{fig:region-compare-10dB-5dB}}
    \subfigure[$\text{SNR}_1\,{=}\,30\,\text{dB},\,\text{SNR}_2\,{=}\,5\,\text{dB}$.]{
    \scalebox{0.6}{\includegraphics*[78,476][418,816]{figs/30dBx5dB}} \label{fig:region-compare-30dB-5dB}}
    \caption{Achievable rate regions of four two-way OFDM relay strategies for two asymmetric SNR scenarios, including (a) $\text{SNR}_1\,{=}\,10\,\text{dB},\,\text{SNR}_2\,{=}\,5\,\text{dB}$, and (b) $\text{SNR}_1\,{=}\,30\,\text{dB},\,\text{SNR}_2\,{=}\,5\,\text{dB}$.} \label{fig:region-compare-asym}
    \vspace{-20pt}
\end{figure*}

\begin{figure}[t]
    \centering
    \scalebox{0.65}{\includegraphics*[42,424][528,820]{figs/rate-msc-psc}}
    \caption{Achievable rate performance comparison of two-way OFDM relay strategies, where $\text{SNR}_1\,{=}\,\text{SNR}_2\,{=}\,\text{SNR}$, $R_{12}=R_{21}$, ``RA'' stands for resource allocation, and ``equal PA'' stands for equal power allocation in the legend.} \label{fig:rate-msc-psc}
\end{figure}
\else
\fi

\appendices
\section{Proof of Lemma~\ref{prop1}}\label{sec:proof-prop1}
According to Theorem~2 of \cite{Kim_TIT08}, the optimal achievable rate region of discrete memoryless two-way relay channel with a DF relay strategy is given by the set of rate pairs $(R_{12},R_{21})$ satisfying
\vspace{-18pt}
\begin{subequations}
\begin{align}
&R_{12}\leq\min\!\left\{tI(X_1;Y_R|X_2),(1-t)I(X_R;Y_2)\right\},\label{eq:ineq1}\\[-8pt]
&R_{21}\leq\min\!\left\{tI(X_2;Y_R|X_1),(1-t)I(X_R;Y_1)\right\},\label{eq:ineq2}\\[-8pt]
&R_{12}+R_{21}\leq tI(X_1,X_2;Y_R),\label{eq:ineq3}\\[-35pt]\nonumber
\end{align}
\end{subequations}
where $X_i$ and $Y_i$ $(i{=}1,2,R)$ are the input and output symbols of the channel at the terminal and relay nodes, respectively.

In the two-way parallel Gaussian relay channel, the channel input and output symbols are given by the vectors $X_i=(X_{i1},\ldots,X_{iN})$ and $Y_i=(Y_{i1},\ldots,Y_{iN})$, respectively. The mutual information terms in \eqref{eq:ineq1}-\eqref{eq:ineq3} can be maximized simultaneously with the following channel input distributions \cite[Section~9.4]{Cover06}:
\begin{enumerate}
\item The elements of channel input $X_{in}$ should be statistically independent for different $n$;
\item The elements of channel input $X_{in}$ should be Gaussian random variables with zero mean and unit variance.
\end{enumerate}
By applying these channel input distributions, and by further considering the power and channel resource constraints, the achievable rate region \eqref{eq:prop1} is attained.

\section{Proof of Proposition~\ref{prop2}}\label{sec:proof-prop2}
The optimal primal variables $(\bm{p}_1^{\star},\bm{p}_2^{\star})$ and the optimal dual variables $(\bm{{\lambda}}^{\star},\bm{{\alpha}}^{\star})$ to problem \eqref{eq:MA-problem} must satisfy the KKT condition
\vspace{-9pt}
\begin{equation} \label{eq:MA-KKT-rate}
\frac{\partial{L_{\text{MA}}}}{\partial{R_{\text{MA}}}}=\lambda_1^{\star}+\lambda_2^{\star}+\lambda_3^{\star}-1=0,
\vspace{-9pt}
\end{equation}
and the complementary slackness conditions
\vspace{-9pt}
\begin{equation} \label{eq:MA-KKT-slackness}
\lambda_k^{\star}\big[R_\text{MA}^{\star}-r_k(\bm{p}_1^{\star},\bm{p}_2^{\star})\big]=0,\;k=1,2,3.
\vspace{-9pt}
\end{equation}
By \eqref{eq:MA-KKT-rate}, the optimal dual variable $\bm{\lambda}^{\star}$ has at most two independent variables, i.e., $\lambda_1^{\star} = 1-\lambda_2^{\star}-\lambda_3^{\star}$. For convenience, $r_k(\bm{p}_1^{\star},\bm{p}_2^{\star})$ is simply denoted as $r_k^{\star}$ for $k=1,2,3$. If $r_1^{\star}\neq r_2^{\star}$, then by the complementary slackness conditions in \eqref{eq:MA-KKT-slackness}, the optimal dual variable $\bm{\lambda}^{\star}$ must satisfy
either $\bm{\lambda}^{\star}=(1{-}\lambda_3^{\star},0,\lambda_3^{\star})$ with $\lambda_2^{\star}=0$ or $\bm{\lambda}^{\star}=(0,1{-}\lambda_3^{\star},\lambda_3^{\star})$ with $\lambda_1^{\star}=0$, and the asserted statement is thus proved. Therefore, we only need to consider the case of $r_1^{\star}= r_2^{\star}$.

It can be easily seen from \eqref{eq:rate-fun1}-\eqref{eq:rate-fun3} that
\vspace{-4pt}
\begin{equation} \label{eq:MA-rate-ineq}
r_1^{\star}+{\rho}r_2^{\star}=
t\sum_{n=1}^N{\log_2\!\Big(1+\frac{g_{1n}p_{1n}^{\star}+g_{2n}p_{2n}^{\star}}t
+\frac{g_{1n}g_{2n}p_{1n}^{\star}p_{2n}^{\star}}{t^2}\Big)}\geq (\rho+1)r_3^{\star},
\end{equation}
and the equality holds in \eqref{eq:MA-rate-ineq} if and only if $p_{1n}^{\star}p_{2n}^{\star}=0$ for $n=1,\ldots,N$. This leads to two cases to be discussed as follows:

\textbf{Case~1}: $r_1^{\star}+{\rho}r_2^{\star}>(\rho+1)r_3^{\star}$.

Since $r_1^{\star}= r_2^{\star}$, we have that $r_3^{\star}<\frac{1}{\rho+1}r_1^{\star}+\frac{\rho}{\rho+1}r_2^{\star}=r_1^{\star}= r_2^{\star}$. If $\lambda_1^{\star}>0$ and $\lambda_2^{\star}>0$, then the complementary slackness conditions in \eqref{eq:MA-KKT-slackness} imply $R_\text{MA}^{\star}=r_1^{\star}=r_2^{\star}>r_3^{\star}$, which contradicts with the rate constraint $R_\text{MA}^{\star}\,{\leq}\,r_3^{\star}$. Therefore, $\lambda_1^{\star}$ and $\lambda_2^{\star}$ can not be both positive, and Proposition~\ref{prop2} is thus proved in Case~1.

\textbf{Case~2}: $r_1^{\star}+{\rho}r_2^{\star}=(\rho+1)r_3^{\star}$.

Since $r_1^{\star}= r_2^{\star}$, we have $R_{\text{MA}}^\star=r_1^{\star}= r_2^{\star}=r_3^{\star}$. If problem \eqref{eq:dual-problem} has an optimal dual solution $(\bm{{\lambda}}^{\star},\bm{{\alpha}}^{\star})$ with ${\lambda}_1^{\star}{\lambda}_2^{\star}=0$, the optimal dual variable $\bm{\lambda}^{\star}$ already satisfies either
$\bm{\lambda}^{\star}=(1{-}\lambda_3^{\star},0,\lambda_3^{\star})$ or $\bm{\lambda}^{\star}=(0,1{-}\lambda_3^{\star},\lambda_3^{\star})$. Suppose that there is an optimal dual point $\bm{\lambda}^{\star}$ satisfying ${\lambda}_1^{\star}>0$ and ${\lambda}_2^{\star}>0$, we will construct another optimal dual solution with the desired structure stated in Proposition~\ref{prop2}.

By \eqref{eq:MA-rate-ineq}, $r_1^{\star}+{\rho}r_2^{\star}=(\rho+1)r_3^{\star}$ happens only if the optimal primal solution satisfies $p_{1n}^{\star}p_{2n}^{\star}=0$ for all $n$. Let us define $\Iset_1\subseteq\Nset\triangleq\{1,\ldots,N\}$ as the index set of subcarriers with $p_{1n}^{\star}\geq0,p_{2n}^{\star}=0$, and $\Iset_2\subseteq\Nset$ with $p_{1n}^{\star}=0,p_{2n}^{\star}\geq0$.
The optimal primal variables $(\bm{p}_1^{\star},\bm{p}_2^{\star})$ and the optimal dual variables $(\bm{{\lambda}}^{\star},\bm{{\alpha}}^{\star})$ to problem \eqref{eq:MA-problem} must satisfy the following KKT conditions:
\vspace{-2pt}
\begin{subequations}
\begin{align}
&\frac{\partial{L_{\text{MA}}}}{\partial{R_{\text{MA}}}}=\lambda_1^{\star}+
\lambda_2^{\star}+\lambda_3^{\star}-1=0\,,\label{eq:MA-KKT-rate1}\\[-4pt]
&\frac{\partial{L_{\text{MA}}}}{\partial{p_{1n}}}=
\alpha_1^{\star}-\frac{t{g_{1n}}[\lambda_3^{\star}+(\rho+1)\lambda_1^{\star}]}
{(\rho+1)(t+g_{1n}p_{1n}^{\star})\ln2}
\;\left\{\begin{array}{ll}
\!\!\geq0,&\text{if}\;p_{1n}^{\star}=0 \\[2pt]
\!\!=0,&\text{if}\;p_{1n}^{\star}>0
\end{array}\right.\!\!\!,\;n\In\Iset_1,\\[-2pt]
&\frac{\partial{L_{\text{MA}}}}{\partial{p_{2n}}}=
\alpha_2^{\star}-\frac{t{g_{2n}}[\rho\lambda_3^{\star}+(\rho+1)\lambda_2^{\star}]}
{\rho(\rho+1)(t+g_{2n}p_{2n}^{\star})\ln2}
\;\left\{\begin{array}{ll}
\!\!\geq0,&\text{if}\;p_{2n}^{\star}=0 \\[2pt]
\!\!=0,&\text{if}\;p_{2n}^{\star}>0
\end{array}\right.\!\!\!,\;n\In\Iset_2,\\[-5pt]
&\lambda_k^{\star}\geq0,\;k=1,2,3,\\[-3pt]
&R_\text{MA}^{\star}-r_k^{\star}\leq0\,,\;k=1,2,3,\\[-3pt]
&\lambda_k^{\star}\big(R_\text{MA}^{\star}-r_k^{\star}\big)=0\,,\;k=1,2,3,\\[-3pt]
&\alpha_i^{\star}\geq0,\;i=1,2,\\[-3pt]
&{\textstyle \sum_{n=1}^N{p_{in}^{\star}}}-P_i\leq0\,,\;i=1,2,\label{eq:MA-KKT-feasible1}\\[-3pt]
&\alpha_i^{\star}\big({\textstyle \sum_{n=1}^N{p_{in}^{\star}}}-P_i\big)=0\,,\;i=1,2.\label{eq:MA-KKT-slackness1}\\[-35pt]\nonumber
\end{align}
\end{subequations}

If $\lambda_1^{\star}\,{\geq}\frac{\,1\,}{\rho}\lambda_2^{\star}\,{>}\,0$, we define a new dual point $\widetilde{\bm{\lambda}}=\big(\lambda_1^{\star}{-}\frac{\,1\,}{\rho}
\lambda_2^{\star},0,\lambda_3^{\star}{+}\frac{\rho+1}{\rho}\lambda_2^{\star}\big)$. Since $\widetilde{{\lambda}}_1{+}\widetilde{{\lambda}}_2{+}\widetilde{{\lambda}}_3=\lambda_1^{\star}{+}
\lambda_2^{\star}{+}\lambda_3^{\star}$, $\widetilde{{\lambda}}_3{+}(\rho{+}1)\widetilde{{\lambda}}_1=\lambda_3^{\star}{+}(\rho{+}1)\lambda_1^{\star}$, $\rho\widetilde{{\lambda}}_3{+}(\rho{+}1)\widetilde{{\lambda}}_2=\rho\lambda_3^{\star}{+}(\rho{+}1)\lambda_2^{\star}$, and $R_{\text{MA}}^\star=r_1^{\star}= r_2^{\star}=r_3^{\star}$,
the dual point $(\bm{\widetilde{\lambda}},\bm{{\alpha}}^{\star})$ and the primal point $(\bm{p_1}^{\star},\bm{p_2}^{\star})$ also satisfy the KKT conditions \eqref{eq:MA-KKT-rate1}-\eqref{eq:MA-KKT-slackness1}. Therefore, $\bm{\widetilde{\lambda}}$ is an optimal dual solution of problem \eqref{eq:dual-problem} that satisfies
$\bm{\widetilde{\lambda}}=(1{-}\widetilde{\lambda}_3,0,\widetilde{\lambda}_3)$.

If $0\,{<}\,\lambda_1^{\star}\,{<}\frac{\,1\,}{\rho}\lambda_2^{\star}$, similarly we can define another dual point $\hat{\bm{\lambda}}=\big(0,\lambda_2^{\star}{-}\rho\lambda_1^{\star},\lambda_3^{\star}{+}(\rho{+}1)\lambda_1^{\star}\big)$.
Since $\hat{{\lambda}}_1{+}\hat{{\lambda}}_2{+}\hat{{\lambda}}_3=\lambda_1^{\star}{+}
\lambda_2^{\star}{+}\lambda_3^{\star}$, $\hat{{\lambda}}_3+(\rho+1)\hat{{\lambda}}_1=\lambda_3^{\star}+(\rho+1)\lambda_1^{\star}$,
$\rho\hat{{\lambda}}_3+(\rho+1)\hat{{\lambda}}_2=\rho\lambda_3^{\star}+(\rho+1)\lambda_2^{\star}$, and $R_{\text{MA}}^\star=r_1^{\star}= r_2^{\star}=r_3^{\star}$, the dual point $(\bm{\hat{\lambda}},\bm{{\alpha}}^{\star})$ and the primal point $(\bm{p_1}^{\star},\bm{p_2}^{\star})$ also satisfy the KKT conditions \eqref{eq:MA-KKT-rate1}-\eqref{eq:MA-KKT-slackness1}. Therefore, $\bm{\hat{\lambda}}$ is an optimal dual solution of problem \eqref{eq:dual-problem} that satisfies
$\bm{\hat{\lambda}}=(0,1{-}\hat{\lambda}_3,\hat{\lambda}_3)$. Hence, the statement of Proposition~\ref{prop2} has been proved for Case~2.

\section{Closed-form Solution to \eqref{eq:MA-KKT} without Using Proposition~\ref{prop2}\\ (Discussed in Remark 1)} \label{sec:cubic-eq}
When the structural property in Proposition~\ref{prop2} is not available, the primal power allocation solution is more complicated for the case of  $p_{1n}^{\star}>0,\,p_{2n}^{\star}>0$. In this case, the KKT conditions \eqref{eq:MA-KKT1} and \eqref{eq:MA-KKT2} both hold with equality. Therefore, we need to solve a system of quadratic equations with two variables. To simplify this problem, we define an auxiliary variable
\vspace{-9pt}
\begin{equation} \label{eq:cubic-aux}
x\triangleq g_{1n}p_{1n}^{\star}+g_{2n}p_{2n}^{\star}.
\vspace{-10pt}
\end{equation}
Then, by \eqref{eq:MA-KKT} and through some derivations, we obtain
\vspace{-4pt}
\begin{subequations} \label{eq:MA-power}
\begin{align} \label{eq:MA-power1}
p_{1n}^{\star}&=~\frac{t(\rho+1)\lambda_1}{\alpha_1(\rho+1)\ln2{-}t g_{1n}\lambda_3/(t+x)}-\frac{t}{g_{1n}},\\[-4pt]
p_{2n}^{\star}&=~\frac{t(\rho+1)\lambda_2/\rho}{\alpha_2(\rho+1)\ln2{-}t g_{2n}\lambda_3/(t+x)}-\frac{t}{g_{2n}}. \label{eq:MA-power2}
\vspace{-10pt}
\end{align}
\end{subequations}

By substituting \eqref{eq:MA-power1} and \eqref{eq:MA-power2} into \eqref{eq:cubic-aux}, we end up with the following \emph{cubic} equation of $x$:
\vspace{-2pt}
\begin{equation} \label{eq:cubic-eq}
\frac{tg_{1n}(\rho+1)\lambda_1}{\alpha_1(\rho+1)\ln2{-}tg_{1n}\lambda_3/(t+x)}
+\frac{tg_{2n}(\rho+1)\lambda_2/\rho}{\alpha_2(\rho+1)\ln2{-}tg_{2n}\lambda_3/(t+x)}
=x+2t.
\end{equation}

It is widely known that the closed-form solutions of the \emph{cubic} equation $x^3+ax^2+bx+c=0$ are given by Cardano's formula \cite{Dunham90}, i.e.,
\vspace{-4pt}
\begin{subequations} \label{eq:cubic-eq-sol}
\begin{align}
&x_1=e^{j\theta_1}\sqrt[3]{\big|-q/2+\sqrt{\Delta}\big|}
+e^{j\theta_2}\sqrt[3]{\big|-q/2-\sqrt{\Delta}\big|}-a/3,\\[-4pt]
&x_2=\omega e^{j\theta_1/3}\sqrt[3]{\big|-q/2+\sqrt{\Delta}\big|}
+\omega^2e^{j\theta_2/3}\sqrt[3]{\big|-q/2-\sqrt{\Delta}\big|}-a/3,\\[-4pt]
&x_3=\omega^2e^{j\theta_1/3}\sqrt[3]{\big|-q/2+\sqrt{\Delta}\big|}
+\omega e^{j\theta_2/3}\sqrt[3]{\big|-q/2-\sqrt{\Delta}\big|}-a/3,\\[-35pt]\nonumber
\end{align}
\end{subequations}
where $p=-a^2/3+b,\,q=2a^3/27-ab/3+c,\,\omega=-1/2+j\sqrt{3}/2,\,\Delta=p^3/27+q^2/4,\,
\theta_1=\mathrm{angle}\,(-q/2+\sqrt{\Delta}),\,\theta_2=\mathrm{angle}\,(-q/2-\sqrt{\Delta})$ , and $\mathrm{angle}\,(\cdot)$ denotes the phase angle of an complex number.
If $\Delta\geq0$, the cubic equation has one real root and a pair of conjugate complex roots; if $\Delta<0$, the cubic equation has three real roots.

After obtaining the positive real root $x$ of \eqref{eq:cubic-eq}, we can easily obtain the optimal $p_{1n}^{\star}$ and $p_{2n}^{\star}$ by substituting $x$ into \eqref{eq:MA-power}, which is the closed-form power allocation solution.

\ifreport
\section{Bounds for Power Dual Variables}\label{sec:alpha-bounds}
The optimal $\bm{\alpha}^{\star}(\bm{\lambda})=(\alpha_1^{\star},\alpha_2^{\star})$ for problem \eqref{eq:alpha-max} must satisfy the KKT conditions \eqref{eq:MA-KKT}, and there must exist $n_1\,(1{\leq}n_1{\leq}N)$ such that $p_{1n_1}>0$. Thus, by the feasible conditions $ p_{in}^{\star}\geq0$ $(i{=}1,2,\,n{=}1,\ldots,N)$ and \eqref{eq:MA-KKT1} with $n=n_1$, we can obtain an upper bound for $\alpha_1^{\star}$,
{\setlength\arraycolsep{1pt} \begin{eqnarray}
\alpha_1^{\star}&&=\frac{t{g_{1n_1}}\lambda_3/(\rho+1)}{(t+g_{1n_1}p_{1n_1}^{\star}+g_{2n_1}p_{2n_1}^{\star})\ln2}
+\frac{t{g_{1n_1}}\lambda_1}{(t+g_{1n_1}p_{1n_1}^{\star})\ln2} \nonumber\\
&&\leq\frac{{g_{1n_1}}[(\rho+1)\lambda_1+\lambda_3]}{(\rho+1)\ln2}
\leq\frac{(\rho+1)\lambda_1+\lambda_3}{(\rho+1)\ln2}\max_n\,\{g_{1n}\},\nonumber
\end{eqnarray}}
and a trivial lower bound $\alpha_1^{\star}\geq0$.

Similar discussions can be applied for $\alpha_2^{\star}$ and the optimal $\alpha_3^{\star}(\lambda_5)$ for problem \eqref{eq:dual-problem-BC}, and thus we can attain that
\vspace{-9pt}
\begin{subequations}
\begin{align}
&0=\alpha_{1,\min}\leq\alpha_1^{\star}\leq\alpha_{1,\max}=\frac{(\rho+1)\lambda_1+\lambda_3}{(\rho+1)\ln2}\max_n\,\{g_{1n}\},\label{eq:alpha1-bounds}\\
&0=\alpha_{2,\min}\leq\alpha_2^{\star}\leq\alpha_{2,\max}=\frac{(\rho+1)\lambda_2+\lambda_3}{(\rho+1)\ln2}\max_n\,\{g_{2n}\},\label{eq:alpha2-bounds}\\
&0=\alpha_{3,\min}\leq\alpha_3^{\star}\leq\alpha_{3,\max}=\max_n\!\left\{\frac{\rho\tilde{g}_{2n}(1{-}\lambda_5){+}\tilde{g}_{1n}\lambda_5}{\rho\ln2}\right\}.\label{eq:alpha3-bounds}
\end{align}
\end{subequations}
By \eqref{eq:alpha1-bounds} and \eqref{eq:alpha2-bounds}, we can initialize $\bm{\alpha}$ and the matrix $\mathbf{A}$ in Algorithm~\ref{alg1} as
\begin{eqnarray}
\bm{\alpha}=(\alpha_{1,\max}/2,\alpha_{1,\max}/2),\;\mathbf{A}=\left[\begin{array}{cc}
\alpha_{1,\max}^2/4 & 0 \\
0 & \alpha_{2,\max}^2/4
\end{array}\right],
\end{eqnarray}
and the initialization of $\alpha_{3,\min}$ and $\alpha_{3,\max}$ in Algorithm~\ref{alg3} is given by \eqref{eq:alpha3-bounds}.
\else
\fi

\section{Proof of Proposition~\ref{prop3}}\label{sec:proof-prop3}
By \eqref{eq:MA-rate-ineq}, it is known that $r_1+\rho r_2\,{\geq}\,(\rho+1)r_3$. This leads to two cases to be discussed:

\textbf{Case~1}: $r_1+\rho r_2\,{>}\,(\rho+1)r_3$.

In this case, if $r_3\geq r_1$, we have $\rho(r_2-r_3)\,{>}\,r_3-r_1\geq0$. Hence, $r_2>r_3$. Assume $\bm{\lambda}^{\star}=(0,1{-}\lambda_3^{\star},\lambda_3^{\star})\In\bm{\Lambda}_2{\setminus}\{\bm{\lambda}^0\}$, which means $0\leq\lambda_3^{\star}<1$, and then we have
\vspace{-9pt}
\begin{equation} \label{eq:prop3-case1}
(1-\lambda_3^{\star})(r_3-r_2)<0.
\vspace{-9pt}
\end{equation}
On the other hand, since $\bm{\lambda}^{\star}=(0,1{-}\lambda_3^{\star},\lambda_3^{\star})$ is an optimal dual point, by \eqref{eq:subgradient-lambda} and \eqref{eq:subgradient-lambda-optimal}, it must be true that
\vspace{-9pt}
\begin{equation}
(\bm{\lambda}^{\star}-\bm{\lambda}^0)^T\bm{\xi}(\bm{\lambda}^0)=(1-\lambda_3^{\star})(r_3-r_2)\geq0,
\vspace{-9pt}
\end{equation}
which leads to a contradiction with \eqref{eq:prop3-case1}. Thus, $\bm{\lambda}^{\star}\notin\bm{\Lambda}_2{\setminus}\{\bm{\lambda}^0\}$. By Proposition~\ref{prop2}, we must have $\bm{\lambda}^{\star}\In\bm{\Lambda}_1$.

Similarly, if $r_3\geq r_2$, we can show that $\bm{\lambda}^{\star}\In\bm{\Lambda}_2$.

\textbf{Case 2}:  $r_1+\rho r_2=(\rho+1)r_3$.

If only one of the inequalities of $r_3\geq r_1$ and $r_3\geq r_2$ is satisfied, similar to Case~1, we can show that $\bm{\lambda}^{\star}\In\bm{\Lambda}_1$ if $r_3\geq r_1$ and $\bm{\lambda}^{\star}\In\bm{\Lambda}_2$ if $r_3\geq r_2$.

If both $r_3\geq r_1$ and $r_3\geq r_2$, we have $r_1=r_2=r_3=R_{\text{MA}}^{\star}$ by the condition of Case~2. Thus, according to \eqref{eq:subgradient-lambda}, the subgradient $\bm{\xi}(\bm{\lambda}^0)=\bm{0}$. Substituting this into \eqref{eq:subgradient-lambda-def} yields
\vspace{-9pt}
\begin{equation}
G_\text{MA}(\bm{\lambda})\leq G_\text{MA}(\bm{\lambda}^0),\quad\forall~ \bm{\lambda}\In\bm{\Lambda}_1\,{\textstyle\bigcup}\,\bm{\Lambda}_2,
\vspace{-9pt}
\end{equation}
which means that $\bm{\lambda}^0$ itself is an optimal solution to \eqref{eq:lambda-max1}. i.e., $\bm{\lambda}^{\star}=\bm{\lambda}^0$.

\ifreport
\section{Proof of Proposition~\ref{prop5}}\label{sec:proof-prop5}
For sufficiently small $x$, since $p_{in}\leq P_i=x\bar{P}_i$, $p_{in}$ is of the order  $O(x^b)$ for $b\geq1$, and $p_{in}^2$ is of the order $O(x^{2b})$.
By this and \eqref{eq:taylor-low}, each rate pair $(R_{12},R_{21})\In\mathcal{R}_\text{DF}(x\bar{\bm P},\mathcal{G})$ satisfies
\vspace{-4pt}
\begin{subequations}
\begin{align}
&R_{12}\leq \frac{1}{\ln2} \min\bigg\{\sum_{n=1}^Ng_{1n}p_{1n},\sum_{n=1}^N\tilde{g}_{2n}p_{Rn}\bigg\}+O(x^{2b}),\label{eq:lowSNR1}\\[-4pt]
&R_{21}\leq \frac{1}{\ln2}  \min\bigg\{\sum_{n=1}^Ng_{2n}p_{2n},\sum_{n=1}^N\tilde{g}_{1n}p_{Rn}\bigg\}+O(x^{2b}),\label{eq:lowSNR2}\\[-4pt]
&R_{12}+R_{21}\leq \frac{1}{\ln2}\sum_{n=1}^N\left( g_{1n}p_{1n}+g_{2n}p_{2n}\right)+O(x^{2b}).\label{eq:lowSNR3}
\end{align}
\end{subequations}
By taking the summation of \eqref{eq:lowSNR1} and \eqref{eq:lowSNR2}, it is easy to see that the sum-rate constraint \eqref{eq:lowSNR3} always holds if \eqref{eq:lowSNR1} and \eqref{eq:lowSNR2} are true. In other words, the achievable rate region $\mathcal{R}_\text{DF}(x\bar{\bm P},\mathcal{G})$ can be expressed by \eqref{eq:lowSNR1} and \eqref{eq:lowSNR2} for sufficiently small $x$. On the other hand, the cut-set outer bound region $\mathcal{R}_\text{out}(x\bar{\bm P},\mathcal{G})$ is also described by \eqref{eq:lowSNR1} and \eqref{eq:lowSNR2}. Taking resource allocation into account, both $\mathcal{R}_\text{DF}(x\bar{\bm P},\mathcal{G})$ and $\mathcal{R}_\text{out}(x\bar{\bm P},\mathcal{G})$ are in the form of
{\setlength\arraycolsep{0pt}
\begin{eqnarray}
\mathcal{R}(x\bar{\bm P},\mathcal{G})=\Bigg\{
&&(R_{12},R_{21})\In\Rset_+^2~\bigg|\nonumber\\[-4pt]
&&R_{12}\leq \frac{1}{\ln2}\min\bigg\{\sum_{n=1}^Ng_{1n}p_{1n},\sum_{n=1}^N\tilde{g}_{2n}p_{Rn}\bigg\}+O(x^{2b}),\nonumber\\[-4pt]
&&R_{21}\leq \frac{1}{\ln2}  \min\bigg\{\sum_{n=1}^Ng_{2n}p_{2n},\sum_{n=1}^N\tilde{g}_{1n}p_{Rn}\bigg\}+O(x^{2b}),\nonumber\\[-4pt]
&&\sum_{n=1}^N{p_{in}}\leq P_i,~p_{in}\geq0,\,i=1,2,R,\,n=1,\ldots,N\Bigg\},
\end{eqnarray}}
$\!\!$for sufficiently small $x$. Therefore, for any rate pair $(R_{12},R_{21})\In\mathcal{R}_\text{out}(x\bar{\bm P},\mathcal{G})$, there exists some $(\hat{R}_{12}, \hat{R}_{21})\In\mathcal{R}_\text{DF}(x\bar{\bm P},\mathcal{G})$ such that $R_{12}=\hat{R}_{12}+O(x^{2b})$ and $R_{21}=\hat{R}_{21}+O(x^{2b})$ for $b\geq1$. Now let $x\,{\rightarrow}\,0$, the rate regions $\mathcal{R}_\text{DF}(x\bar{\bm P},\mathcal{G})$ and $\mathcal{R}_\text{out}(x\bar{\bm P},\mathcal{G})$ tend to be the same.
\else
\fi

\ifreport
\section{Proof of Proposition~\ref{prop7}}\label{sec:proof-prop7}
First, we consider an equal power allocation scheme $p_{in}={P}_i/N=x\bar{P}_i/N$, which leads to a rate region given by
\vspace{-6pt}
{\setlength\arraycolsep{1pt}
\begin{eqnarray}\label{eq:proof-innerregion}
\mathcal{R}_1(x\bar{\bm P},\mathcal{G})=\Bigg\{
(R_{12},R_{21})\In\Rset_+^2~\bigg|
&&R_{12}\leq\sum_{n=1}^N{t}\log_2\!\Big(1\!+\!\frac{g_{1n}x\bar{P}_1}{tN}\Big),\nonumber\\[-4pt]
&&R_{21}\leq\sum_{n=1}^N{t}\log_2\!\Big(1\!+\!\frac{g_{2n}x\bar{P}_2}{tN}\Big),\nonumber\\[-4pt]
&&R_{12}+R_{21}\leq\sum_{n=1}^Nt{\log_2\!\Big(1\!+\!\frac{g_{1n}x\bar{P}_1+g_{2n}x\bar{P}_2}{tN}\Big)},\nonumber\\[-4pt]
&&R_{12}\leq\sum_{n=1}^N(1-{t})\log_2\!\Big(1\!+\!\frac{\tilde{g}_{2n}x\bar{P}_R}{(1-t)N}\Big),\nonumber\\[-4pt]
&&R_{21}\leq\sum_{n=1}^N(1-{t})\log_2\!\Big(1\!+\!\frac{\tilde{g}_{1n}x\bar{P}_R}{(1-t)N}\Big),\,0<t<1\Bigg\}.
\end{eqnarray}}
$\!\!$Since this equal power allocation scheme is feasible, the rate region \eqref{eq:proof-innerregion} satisfies
\vspace{-4pt}
\begin{eqnarray}\label{eq:con1}
\mathcal{R}_1(x\bar{\bm P},\mathcal{G})\subset \mathcal{R}_\text{DF}(x\bar{\bm P},\mathcal{G}).
\end{eqnarray}
$\!\!$All the rate functions in \eqref{eq:proof-innerregion} are of the form $\log_2(1+ax)$ for $a>0$. According to \eqref{eq:taylor-high}, it can be expressed by $\log_2(x)+O(1)$ for sufficiently large $x$. By this, for sufficiently large $x$, we can obtain that
\vspace{-4pt}
{\setlength\arraycolsep{1pt}
\begin{eqnarray}
\mathcal{R}_1(x\bar{\bm P},\mathcal{G})=\Bigg\{
(R_{12},R_{21})\In\Rset_+^2~\bigg|
&&R_{12}\leq\sum_{n=1}^N{t}\log_2(x)+O(1),\nonumber\\[-4pt]
&&R_{21}\leq\sum_{n=1}^N{t}\log_2(x)+O(1),\nonumber\\[-4pt]
&&R_{12}+R_{21}\leq\sum_{n=1}^Nt\log_2(x)+O(1),\nonumber\\[-4pt]
&&R_{12}\leq\sum_{n=1}^N(1-{t})\log_2(x)+O(1),\nonumber\\[-4pt]
&&R_{21}\leq\sum_{n=1}^N(1-{t})\log_2(x)+O(1),\,0<t<1\Bigg\}.
\end{eqnarray}}
$\!\!$Hence, similar to the definition \eqref{eq:multiplexing-DF}, the multiplexing gain region of ${\mathcal{R}}_1(x\bar{\bm P},\mathcal{G})$ is given by
\vspace{-4pt}
{\setlength\arraycolsep{1pt}
\begin{eqnarray}\label{eq:multiplexing1}
\Big\{(r_{12},r_{21})~\Big|
&&r_{12}\leq tN,\,r_{21}\leq tN,\,r_{12}+r_{21}\leq tN,\nonumber\\
&&r_{12}\leq(1-t)N,\,r_{21}\leq(1-t)N,\,0<t<1,\,r_{12},r_{21}\geq0\Big\}.
\end{eqnarray}}
$\!\!$After some simple manipulations, \eqref{eq:multiplexing1} can be simplified as \eqref{eq:multiplexing}.

Then, we consider an infeasible power allocation scheme $p_{in}=P_i=x\bar{P}_i $, which results in the following rate region:
\vspace{-4pt}
{\setlength\arraycolsep{1pt}
\begin{eqnarray}\label{eq:proof-outerregion}
\mathcal{R}_2(x\bar{\bm P},\mathcal{G})=\Bigg\{
(R_{12},R_{21})\In\Rset_+^2~\bigg|
&&R_{12}\leq\sum_{n=1}^N{t}\log_2\!\Big(1\!+\!\frac{g_{1n}x\bar{P}_1}{t}\Big),\nonumber\\[-4pt]
&&R_{21}\leq\sum_{n=1}^N{t}\log_2\!\Big(1\!+\!\frac{g_{2n}x\bar{P}_2}{t}\Big),\nonumber\\[-4pt]
&&R_{12}+R_{21}\leq\sum_{n=1}^Nt{\log_2\!\Big(1\!+\!\frac{g_{1n}x\bar{P}_1+g_{2n}x\bar{P}_2}{t}\Big)},\nonumber\\[-4pt]
&&R_{12}\leq\sum_{n=1}^N(1-{t})\log_2\!\Big(1\!+\!\frac{\tilde{g}_{2n}x\bar{P}_R}{1-t}\Big),\nonumber\\[-4pt]
&&R_{21}\leq\sum_{n=1}^N(1-{t})\log_2\!\Big(1\!+\!\frac{\tilde{g}_{1n}x\bar{P}_R}{1-t}\Big),\,0<t<1\Bigg\}.
\end{eqnarray}}
$\!\!$It can be easily seen from \eqref{eq:prop1} that the rate region \eqref{eq:proof-outerregion} satisfies
\vspace{-4pt}
\begin{eqnarray}\label{eq:con2}
\mathcal{R}_\text{DF}(x\bar{\bm P},\mathcal{G})\subset {\mathcal{R}}_2(x\bar{\bm P},\mathcal{G}),
\end{eqnarray}
and the multiplexing gain region of ${\mathcal{R}}_2(x\bar{\bm P},\mathcal{G})$ is also given by \eqref{eq:multiplexing1}, and thus \eqref{eq:multiplexing}. Therefore, the asserted statement follows from \eqref{eq:con1} and \eqref{eq:con2}.
\else
\fi

\section*{Acknowledgment}
The authors would like to thank ArulMurugan Ambikapathi, Wei-Chiang Li, Kun-Yu Wang, and Tsung-Han Chan for their valuable suggestions during the preparation of this paper.

\bibliographystyle{IEEEtran}
\linespread{1.0}\selectfont
\bibliography{refs_WCOM}

\end{document}